\newtheorem{fact}{Fact}[section]
\newtheorem{lemma}[fact]{Lemma}
\newtheorem{theorem}[fact]{Theorem}
\newtheorem{corollary}[fact]{Corollary}
\newtheorem{proposition}[fact]{Proposition}
\newtheorem{definition}[fact]{Definition}
\theoremstyle{definition}
\newtheorem{ex}{Example}
\newtheorem{rem}{Remark}
\newcommand\R{\mathbb R}
\newcommand\N{\mathbb N}
\newcommand\Z{\mathbb Z}
\newcommand\Zp{\mathbb Z_+^*}
\newcommand\Zm{\mathbb Z_-^*}
\def\eref#1{(\ref{#1})}
\newcommand{\indep}{ {\perp\!\!\!\perp} }
\def\CE{\mathcal{E}}
\def\cF{\mathcal{F}}
\def\cN{\mathcal{N}}
\def\cA{\mathcal{A}}
\newcommand{\expct}[1]{\ensuremath{\text{{E}$\left[#1\right]$}}}
\def\Cb{C_{\bullet}}
\def\Cr{C_{\circ}}
\def\Cz{C_{\circledcirc}}
\def\Sb{S_{\bullet}}
\def\Sr{S_{\circ}}
\def\Sz{S_{\circledcirc}}
\newenvironment{itemi}
{%
  \begin{list}{$\bullet$}%
  {\noindent%
    \usecounter{enumi}%
    \setlength{\topsep}{2pt}%
    \setlength{\partopsep}{0pt}%
				\setlength{\itemsep}{2pt}%
    \setlength{\parsep}{0pt}%
    \setlength{\leftmargin}{2.5em}%
    \setlength{\labelwidth}{1.5em}%
    \setlength{\labelsep}{0.5em}%
    \setlength{\listparindent}{0pt}%
    \setlength{\itemindent}{0pt}%
  }%
}%
{\end{list}}%
\begin{document}

\title{Stability of the bipartite matching model}
\author{Ana {\sc Bu\v si\'c}\thanks{INRIA/ENS, 23, avenue d'Italie, CS 81321, 75214 Paris Cedex 13, France. 
E-mail: {\tt ana.busic@inria.fr}.} \and Varun {\sc
    Gupta}\thanks{Computer Science Department, Carnegie Mellon University, Pittsburgh, PA, USA. 
E-mail: {\tt varun@cs.cmu.edu}.} \and Jean {\sc Mairesse}
\thanks{LIAFA, CNRS et Univ. Paris 7, case 7014,
75205 Paris Cedex 13, France.
E-mail: {\tt mairesse@liafa.jussieu.fr.}}}


\maketitle

\begin{abstract}
We consider the bipartite matching model of customers and servers introduced by Caldentey, 
Kaplan, and Weiss (Adv. Appl. Probab., 2009). Customers
and servers play symmetrical roles. There is a finite set $C$,
resp. $S$, of customer, resp. server, 
classes. 
Time is discrete and at each time step, one customer and one server arrive in the
system according to 
a joint probability measure $\mu$ on $C\times S$, independently of the
past.  Also, at each time step, pairs of {\em matched}
customer and server, if they exist, depart from the
system. Authorized {\em matchings} are given by a fixed bipartite
graph  $(C, S, E\subset C \times S)$. 
A {\em matching policy} is chosen,
which decides how to match when there are several possibilities. 
Customers/servers that cannot be matched are stored in a
buffer.  
The evolution of the
model can be described by a discrete time Markov chain. We study
its stability under various  admissible matching 
policies including: ML (Match the Longest),
MS (Match the Shortest), FIFO (match the oldest), priorities.
There exist natural necessary conditions for stability (independent of the matching
policy) defining the maximal possible stability region. For some bipartite graphs, we prove that the
stability region
is indeed maximal for any
admissible matching policy. For the ML policy, we prove that the stability region
is maximal for any bipartite graph. For the MS and priority policies,
we exhibit a bipartite graph with a non-maximal stability
region.
\end{abstract}

\smallskip

{\noindent\bf Keywords:} Markovian queueing theory, stability, 
bipartite matching. 

\smallskip

{\noindent\bf Mathematics Subject Classification (MSC2010):} 60J10,
60K25, 68M20, 05C21. 

\tableofcontents


\section{Introduction}
In queueing theory, customers and servers play
different roles. Customers arrive in the system, accumulate in a
buffer, get served by a server, and eventually depart. Servers on the other hand
alternate between idle and busy periods but remain forever in the
system. 

Within this framework, many variations and refinements are
possible. For instance, we may consider a model with multi-class 
customers and distinguishable servers. A customer of a given class $c$
must choose its server from a specified subset $S(c)$ of the
servers. And of course, the subsets $S(c)$ may intersect, see Figure
\ref{fig:call}.  

\begin{figure}[htb]
  \centering
  \includegraphics[width=.28\textwidth]{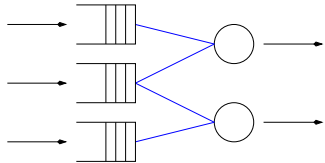}
  \caption{\label{fig:call}Queueing model of a call center.}
\end{figure}

\medskip

In this paper, we consider a model with the same multi-class flavor, but
in which, by contrast, customers and servers
play completely identical roles. We now argue that this simple
symmetry requirement leads in a natural and ineluctable way 
to the {\em bipartite matching model}. 

\medskip

By symmetry, both customers and
servers should arrive into the system and depart from it. More specifically, upon
completion of a service, both the customer and the server should depart
simultaneously. 
To model arrivals, we have a priori more flexibility, but there is
basically one non-trivial choice which is to assume that time is
discrete and that customers and servers arrive in pairs. 

Consider indeed 
the simplest possible model with continuous-time arrivals: (i) there is only one class of customers and one
class of servers; (ii) customers, resp. servers, arrive according to a
Poisson process of rate $\lambda$, resp. $\mu$; (iii) services have
duration 0. 
Let us describe the state by $Z=X-Y$, where $X$ is the number of
unmatched customers and $Y$ the number of unmatched servers. The
process $Z$ is a birth-and-death continuous-time Markov process on
$\Z$ with drift $\lambda-\mu$. It is either transient (if $\lambda\neq
\mu$) or null recurrent (if $\lambda=\mu$), but it is never positive
recurrent. 

Let us switch to discrete-time i.i.d. arrivals. At each time step,  
a batch of customers and a batch of servers arrive into the system. If
the size of the batches are allowed to be different for customers
and servers, then we are back to the continuous-time situation, and
even the simplest model is never positive recurrent. 

Therefore to get a non-trivial model, the natural assumption is that exactly one customer
and one server arrive into the system at each time step. The resulting
model is symmetric in another respect: both arrivals and departures
occur in pairs. 

For simplicity, we always
assume that the 
service durations are null. So the model is specified by: (i) the
finite set $C$ of customer classes and the finite set $S$ of server
classes; (ii) the probability law $\mu$ on $C\times S$ for the arrivals in pairs; (iii) the
bipartite graph $(C, S, E\subset C \times S)$ giving the possible
matchings between customers and servers (hence the possible
departures in pairs); (iv) the matching policy to decide how to match when
several choices are possible. We consider so called {\em admissible}
policies which depend only on the current state of the system. Under
these assumptions, the buffer content evolves as a discrete-time Markov chain. 

We call this model the {\em bipartite matching} model. 

\medskip

The bipartite matching model has been introduced by Caldentey, 
Kaplan, and Weiss~\cite{CKW09}, under an additional assumption of
independence between arriving customers and servers ($\forall c,s,
\ \mu(c,s)=\mu(c,S)\mu(C,s)$), and for the FIFO policy.  
In their paper, the authors mention several
possible domains of applications ranging from call centers to crossbar
data switches. They also provide references to papers on
related models. We refer the interested reader to
\cite{CKW09} for details. 

\medskip

In the bipartite matching model, there is an equal 
number of customers and servers at any time. But the matching
constraints may result in instability with unmatched
customers and servers accumulating. 
It turns out that proving stability, i.e. positive recurrence of
the Markov chain, is highly non-trivial. 
Given a bipartite graph $(C,S,E)$, there exist natural necessary
conditions on $\mu$ for stability to hold true. When these conditions
are also sufficient, we say that the stability region is {\em
  maximal}. 

\medskip

Caldentey $\&$ al conjecture that any 
bipartite graph has a maximal stability region for the FIFO
policy~\cite[Conjecture 4.2]{CKW09}. 
They prove the conjecture for some specific models (under the additional assumption of independence of
customers and servers): (i) the N model defined by $C=\{1,2\},S=\{1',2'\}, E=
  C\times S - \{(2,2')\}$, (ii) the W model, i.e. the matching model
  version of Figure \ref{fig:call}; (iii) the NN model of Figure
  \ref{fig:NN}. In the first two cases, they are also able to compute
  explicitly the stationary distribution. For the last case, the proof
  of stability is already intricate. 

\medskip

In the present paper, we consider the stability issue for various
admissible matching policies: ML (Match the Longest),
MS (Match the Shortest), FIFO (match the oldest), random (match
uniformly), priorities. The irreducibility of the Markov chain
describing the model is not granted, and we first study this question
in detail (Section \ref{se-utc}). 
Then we obtain the following results:
\begin{itemi}
 \item[-] sufficient conditions under which any admissible policy is
   stable (Section \ref{se-suff}); 
 \item[-] for the NN model, the MS policy and some priority 
       policies do not have a maximal stability region (Section \ref{se-ms}); 
 \item[-] for any bipartite graph, the ML policy has a maximal stability
   region (Section \ref{se-ml}).   
\end{itemi}
We do not know if the stability region is
always maximal for the FIFO and random policies.

\medskip

{\bf Notations.} Denote by $\N = \{0, 1, 2, \ldots\}$ the set of
non-negative integers. 
Let $A^*$ be the free monoid generated by $A$. For any word $w \in A^*$ and
any  $B \subset A$, set $|w|_B = \# \{i \mid w_i \in B\}$, the number
of occurrences in $w$ of letters from $B$. For $B = \{b\}$, we shorten
the notation to $|w|_b$. Furthermore, 
for any $w \in A^*$, set $[w]:=(|w|_a)_{a\in A}$ (the commutative
image of $w$).

\section{The bipartite matching model}

We now proceed to a more formal definition of the model. 

\begin{definition}\label{de-structure}
A {\em bipartite matching structure} is a quadruple $(C,S,E,F)$ where 

\begin{itemi}
\item $C$ is the non-empty and finite set of customer types;
\item $S$ is the non-empty and finite set of server types;
\item $E \subset C\times S$ is the set of possible matchings;
\item $F \subset C\times S$ is the set of possible arrivals.
\end{itemi}

The bipartite graph $(C,S,E)$ is called the {\em matching graph}. 
It is assumed to be connected. The bipartite graph $(C,S,F)$
is called the {\em arrival graph}. It is assumed to have no isolated
vertices. 
\end{definition}

The two assumptions in Def.
\ref{de-structure} are made without loss of generality, see Remarks
\ref{re-wlog1} and \ref{re-wlog2}. 

\medskip

In Figure \ref{fig:NN} we give an example of a  matching graph with $3$ customer and $3$ server types, 
called the ``NN graph'' in the following.
\begin{figure}[htb]
  \centering
  \includegraphics[width=.25\textwidth]{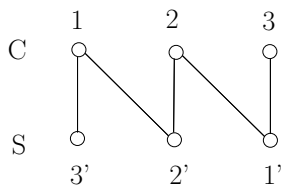}
  \caption{\label{fig:NN}NN graph.}
\end{figure}

Customers and servers play symmetrical roles in the model. Also $E$ and
$F$ play dual roles. 
The graph $(C,S,E)$ defines the pairs that may depart from the system,
while the graph $(C,S,F)$ defines the pairs that may arrive into the
system. 

\begin{definition}\label{de-model}
A {\em bipartite matching model} is a triple $[ (C,S,E,F), \mu,
  \mbox{{\sc Pol}}]$, where 

\begin{itemi}
\item $(C,S,E,F)$ is a bipartite matching structure;
\item $\mu$ is a probability measure on $C\times S$ satisfying 
\begin{equation}\label{eq-3cond}
\mbox{supp}(\mu) = F, \ \mbox{supp}(\mu_C)=C,
\ \mbox{supp}(\mu_S)=S \:,
\end{equation}
where $\mu_C$ and $\mu_S$ are the $C$ and $S$ marginals of $\mu$. 
\item {\sc Pol} is an admissible matching policy (to be defined in \S \ref{sse-match}).
\end{itemi}
\end{definition}

Observe that we can simplify the notation to $[ (C,S,E), \mu, \mbox{{\sc Pol}}]$.
We say that the model $[ (C,S,E), \mu,
  \mbox{{\sc Pol}}]$ is {\em associated} with the structure
$(C,S,E,F)$. 

\begin{rem}\label{re-wlog1}
For \eref{eq-3cond} to have solutions, $(C,S,F)$ must be
without isolated vertices, the assumption made in Definition
\ref{de-structure}. This is not a real restriction: if it is not
satisfied, we can consider a new model without such customer or server
classes. 
\end{rem}

A realization of the model is as follows. Consider an i.i.d. sequence
of random variables of law $\mu$, representing the arrival stream of
pairs of customer/server. 
A state of the buffer consists of an equal
number of customers and servers with no possible matchings between the
classes. Upon arrival of a new ordered pair $(c,s)$, two situations
may occur: if neither $c$ nor $s$ match with the servers/customers
already present in the buffer, then $c$ and $s$ are simply added to
the buffer; if $c$, resp. $s$, can be matched then it departs
the buffer with its match. If several matchings are possible for $c$,
resp. s, then it is the role of the matching policy to select one. 
An admissible policy selects according to the current state of the
buffer (and not according to the whole history of the buffer contents,
for instance). 
The resulting evolution of the buffer is described by a  discrete-time Markov
chain. 

\subsection{State space description}

Depending on the matching policy, we consider either a commutative
(e.g. for Random)
or a non-commutative (e.g. for FIFO) state space description. 
The different policies considered in the paper will be formally defined in
\S \ref{sse-match}. 

\medskip

Let us choose a matching graph $(C,S,E)$. We introduce the following convenient notations: $C(s)$ is the set of
customer classes that can be matched with an $s$-server; $S(c)$ is the
set of server classes that can be matched with a $c$-customer: 
$$S(c) = \{s \in S \; : \; (c,s) \in E\}, \quad C(s) = \{c \in C \; : \; (c,s) \in E\}.$$
For any subsets $A \subset C$, and $B \subset S$, we define 
$$S(A) = \cup_{c\in A}S(c), \quad C(B) = \cup_{s\in B}C(s).$$

\paragraph{Commutative state space.} 
A state of the system is given by $(x, y)$,  
$x = (x_c)_{c\in C}$ and $y = (y_s)_{s\in S}$, where $x_c$ denotes the number of customers of type $c$ and $y_s$ the number 
of servers of type $s$. The {\em commutative state space} is:
\begin{equation}\label{eq-css}
\CE = \Bigl\{ (x, y) \in \N^C\times \N^S \; : \; \sum_{c\in C} x_c = \sum_{s \in S} y_s; \; 
\forall  (c, s) \in E, \; x_cy_s = 0 \Bigr\}.
\end{equation}

\paragraph{Non-commutative state space.} A state of the system is given by two finite words 
of the same size $k \geq 0$, respectively on the alphabets $C$ and $S$,
describing unmatched customers and servers.
The {\em non-commutative state space} is:
\begin{equation}\label{eq-ncss}
\CE = \Bigl\{ (u,v) \in \cup_{k\geq 0} (C^{k}\times S^k) \; : \; ([u],
    [v]) \mbox{ belongs to } \eref{eq-css}  \Bigr\}.
\end{equation}

\paragraph{Facet.} Both the commutative and the non-commutative state space can be decomposed into facets, defined only by the 
non-zero classes. 

\begin{definition}
 A {\em facet} is an ordered pair $(U, V)$ such that:
 $U \subset C, V \subset S$ and $U \times
 V \subset (C \times S - E)$. The {\em zero-facet} is the
 facet $(\emptyset,\emptyset)$, we denote it shortly by $\emptyset$.
\end{definition}

For a facet $\cF = (U, V)$, define:
\begin{align*}
&\Cb(\cF)= U, & \qquad & \Sb(\cF) = V,  \\
&\Cz(\cF)= C(V), & \qquad & \Sz(\cF)= S(U),  \\
&\Cr(\cF) = C- (\Cb(\cF) \cup \Cz(\cF) ), & \qquad & \Sr(\cF) = S- (\Sb(\cF) \cup
\Sz(\cF) ).
\end{align*} 

We alleviate the notations to $\Cb, \Sb,\Cz, \dots, $ when there is no
possible confusion. 
The symbol $\bullet$ stands for the non-zero
classes, the symbol $\circledcirc$ for the classes that are forced to be at zero (since
they are matched with non-zero classes), and the symbol $\circ$ for
the classes that happen to be at zero. 

\medskip

The following notion will play an important role later on. 

\begin{definition}\label{de-saturated}
A facet $\cF$ is called saturated if $\Cr(\cF)= \emptyset$ or $\Sr(\cF) = \emptyset$. 
\end{definition}

In Figure \ref{fig:NNf33}, the facet on the left is non-saturated, while the
one on the right is saturated. 

\begin{figure}[htb]
  \centering
  \includegraphics[width=.5\textwidth]{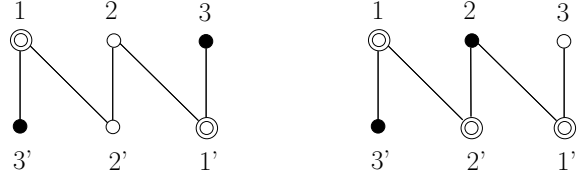}
  \caption{\label{fig:NNf33}NN graph: facets  $(\{3\}, \{3'\})$ and $(\{2\}, \{3'\})$.}
\end{figure}

\paragraph{Graphical convention.}

A facet $\cF$ 
can be represented graphically by
coloring the nodes of the bipartite graph  according to the above
convention (see Figure \ref{fig:NNf33} for an illustration): 

\begin{itemi}
 \item[-] nodes in $\Cb(\cF)$ and  $ \Sb(\cF)$ are represented as filled circles;
 \item[-] nodes in $\Cz(\cF)$ and  $\Sz(\cF)$ are represented as double circles;
 \item[-] nodes in $\Cr(\cF)$ and  $\Sr(\cF)$ are represented as simple circles.
\end{itemi}

In Figure \ref{fig:NNfacets}, we have represented the 
facets of the NN graph. The more complex case of the NNN graph
will be given in Section \ref{se-suff}, Figure \ref{fig:NNNfacets}.

\begin{figure}[htb]
  \centering
  \includegraphics[width=.40\textwidth]{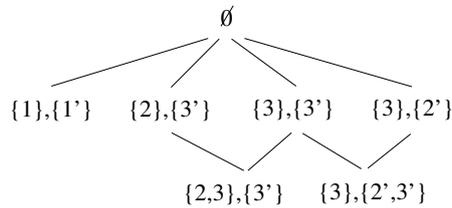}
  \caption{\label{fig:NNfacets}Facets for the NN graph.}
\end{figure}

Algorithm~\ref{al:facets} takes as input a matching graph and returns
as output the set of facets. The termination and correctness of the
algorithm are easily proved.

\begin{algorithm}
\KwData{
A bipartite graph $G=(C, S, E)$. 
}
\KwResult{$\mathit{Facets}$ - set of all facets of $G$.}
\Begin{
$\mathit{Facets} \leftarrow \emptyset$; $\mathit{New} \leftarrow \emptyset$\;
\lForEach{$(i,j)\in C\times S - E$}{
$\mathit{New} \leftarrow \mathit{New} \cup \{(\{i\}, \{j\})\}$\; 
}
\While{$\mathit{New} \not=\emptyset$}{
$\mathit{Facets} \leftarrow \mathit{Facets} \cup \mathit{New}$\;
$\mathit{Old} \leftarrow \mathit{New}$; $\mathit{New} \leftarrow \emptyset$\;
\ForAll{${\cal H}, {\cal K} \in \mathit{Old}$ such that ${\cal H} \not = {\cal K}$} {
\If{$\Cb({\cal H}) = \Cb({\cal K})$ or $\Sb({\cal H}) = \Sb({\cal K})$}{
  $Z \leftarrow (\Cb({\cal H}) \cup \Cb({\cal K}), \; \Sb({\cal H}) \cup \Sb({\cal K}))$\;
  $\mathit{New} \leftarrow \mathit{New} \cup \{Z\}$\;
}
}
}
$\mathit{Facets} \leftarrow \mathit{Facets} \cup \{\emptyset\}$\;
\Return{$\mathit{Facets}$\;}}
\caption{Computation of the  facets}
\label{al:facets}
\end{algorithm}

\subsection{Admissible matching policies}\label{sse-match}

Informally, a matching policy is {\em admissible} if:
\begin{itemi}
\item[-] only the current state of the buffer is taken into account;
\item[-] priority is given to customers/servers 
that are already present in the buffer: if the state is $(u,v)$ and
the new arrival is $(c,s)\in E$, then $c$ and $s$ are matched together
iff there are no servers from $S(c)$ in $v$ and no customers from
$C(s)$ in $u$. 
\end{itemi} 

It results from the first point that an admissible matching policy can
be described as a mapping $\odot : \CE \times (C\times S) \rightarrow
\CE$ which returns the new state of the system after an arrival. 
The second point is called the {\em buffer-first} assumption. It is not a real restriction:
a matching policy that always gives priority to new arrivals can be seen as a special case 
of the above with an arrival probability  $\mu$ such that $\mu(E) = 0$. 

\medskip

We now define admissible policies formally, distinguishing between the
non-commutative and commutative state spaces.

For a word $w \in A^k$ and $i\in \{1,\dots, k\}$, we denote by 
$w_{[i]}:= w_1\ldots w_{i-1} w_{i+1}\ldots w_k$ the
subword of $w$ obtained by deleting $w_i$. 

\begin{definition}[Non-commutative case]
\label{def:bf}
A matching policy is {\em admissible} if there are functions $\Phi$ and $\Psi$ such that:
$$
(u, v) \odot (c,s) = 
\left \{
\begin{array}{ll}
(uc, vs), & \textrm{if } \; |u|_{C(s)} = 0, \;|v|_{S(c)} = 0, \;
  (c,s) \not\in E \\
(u, v), & \textrm{if } \; |u|_{C(s)} = 0, \; |v|_{S(c)} = 0, \; (c,s)\in E\\ 
(u_{\left [\Phi(u,s)\right]}, v_{\left [\Psi(v,c)\right]}), & \textrm{if } \; |u|_{C(s)} \not= 0, \; |v|_{S(c)} \not= 0\\ 
(u_{\left [\Phi(u,s)\right]}c, v), & \textrm{if } \; |u|_{C(s)} \not= 0, \; |v|_{S(c)} = 0\\ 
(u, v_{\left [\Psi(v,c)\right]}s), & \textrm{if } \; |u|_{C(s)} = 0, \; |v|_{S(c)} \not= 0
\end{array}
\right .
$$
\end{definition}

The {\em FIFO} and {\em LIFO} policies are admissible matching policies with functions $\Phi$ and $\Psi$ as follows:
\begin{itemize}
 \item FIFO : $\Phi(u, s) = \arg\min \{u_k \in C(s)\}$, 
$\Psi(v, c) = \arg\min \{v_k \in S(c)\}$.
 \item LIFO : $\Phi(u, s) = \arg\max \{u_k \in C(s)\}$, 
$\Psi(v, c) = \arg\max \{v_k \in S(c)\}$.
\end{itemize}

For $c\in C$, let $e_c\in \N^C$ be defined by $(e_c)_c=1$ and
$(e_c)_d=0, d\neq c$. For $s\in S$, let $e_s$ be defined accordingly. 

\begin{definition}[Commutative case]
 \label{def:bfc}
A matching policy is {\em admissible} if there are functions $\Phi$ and $\Psi$ such that:
$$ 
(x, y) \odot (c,s) = 
\left \{
\begin{array}{ll}
(x+e_c, y+e_s), & \textrm{if } \; x_{C(s)} = 0, \; y_{S(c)} = 0, \;
  (c,s) \not\in E\\
(x, y), & \textrm{if } \; x_{C(s)} = 0, \; y_{S(c)} = 0, \; (c,s)\in E\\ 
(x - e_{\Phi(x,s)}, y - e_{\Psi(y,c)}), & \textrm{if } \; x_{C(s)} \not= 0, \; y_{S(c)} \not= 0\\ 
(x - e_{\Phi(x,s)} + e_{c}, y), & \textrm{if } \; x_{C(s)} \not= 0, \; y_{S(c)} = 0\\ 
(x,  y - e_{\Psi(y,c)} + e_s), & \textrm{if } \; x_{C(s)} = 0, \; y_{S(c)} \not= 0
\end{array}
\right .
$$
\end{definition}

The following commutative matching policies are admissible (for RANDOM, ML, and MS policies 
$\Phi(u, s)$ and $\Psi(v, c)$ are random variables):

\begin{itemize}

\item PR (Priorities). For each customer type $c\in C$, we define a priority function 
$\alpha_c : S(c) \rightarrow \{1, \ldots, |S(c)|\}$.
Similarly, for each server type $s \in S$, we define $\beta_s : C(s) \rightarrow \{1, \ldots, |C(s)|\}$.
In the case of several matching options, a customer/server is matched with the server/customer that has the 
highest priority (greatest value of the priority function).  
It is convenient to specify the priorities by two  $|C|\times |S|$
matrices $A$  and $B$ defined by:
$$
A_{cs} = \left \{ 
\begin{array}{ll}
 \alpha_c(s), & (c,s) \in E \\
 0, & \textrm{otherwise}
\end{array}
\right .
 \quad 
B_{cs} = \left \{ 
\begin{array}{ll}
\beta_s(c), & (c,s) \in E \\
 0, & \textrm{otherwise}
\end{array}
\right . \:.
$$
Then $\Phi(x, s) = \arg\max \{ \beta_s(c) \; : \; c \in C(s), x_c>0\}$ and 
$\Psi(y, c) = \arg\max \{ \alpha_c(s) \; : \; s \in S(c), y_s>0\}$. 

\item RANDOM : $\Phi(x, s)$, resp. $\Psi(y, c)$, is a random variable
valued in $C(s)$, resp. $S(c)$, and distributed as $\left
(x_i/\sum_{j \in C(s)}x_j\right )_{i \in C(s)}$, 
resp. $\left (y_i/\sum_{j \in S(c)}y_j\right )_{i \in S(c)}$. Intuitively, the
match is chosen uniformly among all possible ones. 

\item ML : $\Phi(x, s)$, resp. $\Psi(y, c)$, is a  random variable
  uniformly distributed on 
$\arg \max \{ x_i : i \in C(s) \}$, 
resp. $\arg \max \{ y_i : i \in S(c) \}$.

\item MS : $\Phi(x, s)$, resp. $\Psi(y, c)$, is a random variable
  uniformly distributed on
$\arg \min \{ x_i>0 : i \in C(s) \}$, resp. $\arg \min \{ y_i>0 : i \in S(c) \}$.
\end{itemize}

\section{Necessary conditions for stability}

To introduce the main ideas, consider first a simpler finite and deterministic
problem. Let $(C,S,E)$ be a matching graph. Consider a batch of
customers $x\in \N^C$ and
a batch of servers $y\in \N^S$ of equal size: $\sum_c x_c = \sum_s
y_s$. 
A {\em perfect matching} of $x$ and $y$ is a tuple $m\in \N^E$ such that: 
\[
\forall c\in C, \ x_c = \sum_{s \in S(c)} m_{cs}, \qquad
\forall s \in S, \ y_s = \sum_{c \in C(s)} m_{cs} \:.  
\]
By Hall's Theorem (aka the ``marriage Theorem''), there exists a
perfect matching if and only if: 
\begin{equation}\label{eq-hall}
\begin{array}{ll}
\sum_{c\in U} x_c \leq \sum_{s\in S(U)} y_s, \quad \forall U  \subset C\\
\sum_{s\in V} y_s \leq \sum_{c\in C(V)} x_c, \quad \forall V  \subset S
\end{array}
\end{equation}
A perfect matching, if there is one, can be obtained by restating the
model as a flow network and by solving the maximum flow problem for
which efficient algorithms exist~\cite{FoFu,EdKa}. 

\medskip

The bipartite matching model is much more complicated: first it is random,
and second the matchings
have to be performed on the fly, at each time step. 
However the two ingredients of the simpler model will play an
instrumental role in the analysis: (i) the conditions {\sc NCond}, to
be defined in \eref{eq:CNat}, are related to \eref{eq-hall}; (ii) the
restatement as a flow problem is used in most of the proofs.

\medskip

Consider now a bipartite matching model $[ (C,S,E), \mu, \mbox{{\sc Pol}}]$.
We identify the model with the Markov chain on the state space $\CE$ describing the evolution
of the buffer content. 

\medskip

Let $P$ be the transition matrix of the Markov chain. A probability
measure $\pi$ on $\CE$ is {\em stationary} if $\pi P=\pi$. It is {\em
  attractive} if for any probability measure $\nu$ on $\CE$, the
sequence of Cesaro averages of $\nu P^n$ converges weakly to $\pi$. 

\begin{definition}\label{de-stab}
The model is said to be {\em stable} if the Markov chain has a unique
and attractive 
stationary probability measure. 
\end{definition}

It implies in particular that the graph of the Markov chain has
a unique terminal strongly connected component with all states leading
to it. 

\medskip

Let $\mu_C$ be a probability measure on $C$ and $\mu_S$ a probability
measure on $S$. 
Define the following conditions on $(\mu_C,\mu_S)$:

\begin{equation}
\label{eq:CNat}
\textrm{{\sc NCond}}: \quad \left \{ 
\begin{array}{ll}
\mu_C(U) < \mu_S(S(U)), \quad \forall U  \subsetneq C\\
\mu_S(V) < \mu_C(C(V)), \quad \forall V  \subsetneq S
\end{array}
\right .
\end{equation}

\medskip

The above conditions appear in \cite{CKW09}. They have a natural interpretation.  Let $\mu_C$ and
$\mu_S$ be the marginals of the arrival probability $\mu$.
Customers from $U$
need to be matched with servers from $S(U)$. The first line in
{\sc NCond} asks for strictly more servers in average from $S(U)$ than customers from
$U$. The second line has a dual interpretation. 

Using the Strong Law of Large Numbers, we also see that the arrivals up to
time $n$ satisfy \eref{eq-hall} for all values of $n$ large enough if
and only if {\sc NCond} is satisfied. 

\begin{lemma}\label{le-le}
The conditions {\sc NCond} are necessary stability conditions: if the 
Markov chain is stable then the 
conditions {\sc NCond} are satisfied by the marginals of $\mu$.
\end{lemma}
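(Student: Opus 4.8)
The plan is to argue by contraposition: assuming {\sc NCond} fails, I exhibit a transient (or at least non-positive-recurrent) behaviour of the Markov chain, which contradicts stability. Suppose the first line of {\sc NCond} is violated, so there is a proper subset $U \subsetneq C$ with $\mu_C(U) \geq \mu_S(S(U))$. The key observation is a conservation/monotonicity inequality for the quantity
\[
\Delta(u,v) \;:=\; |u|_U \;-\; |v|_{S(U)} \;=\; \sum_{c\in U} x_c \;-\; \sum_{s\in S(U)} y_s ,
\]
evaluated along the trajectory of the chain (using either state-space description). First I would check that $\Delta$ can only change by $+1$, $0$, or $-1$ at each step, and more importantly that $\Delta$ \emph{increases by one} whenever the arriving pair $(c,s)$ has $c\in U$ and $s\notin S(U)$, and \emph{decreases by at most one} only when $s\in S(U)$. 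Indeed, an arriving $c\in U$ can only be matched with a server in $S(c)\subset S(U)$, so it either enters the buffer (raising $|u|_U$) or consumes a server counted in $|v|_{S(U)}$; either way $\Delta$ does not decrease on such an arrival. Symmetrically an arriving server $s\in S(U)$ can lower $\Delta$ by at most one, and arrivals with $c\notin U, s\notin S(U)$ leave $\Delta$ unchanged. This is the heart of the matter and the part I expect to require the most care, because one must verify it against \emph{all five cases} in the admissible-policy definition (Definitions~\ref{def:bf}/\ref{def:bfc}) and in particular confirm that the buffer-first assumption never forces a spurious decrease.

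Granting the monotonicity, $\Delta(Z_n) - \Delta(Z_0)$ dominates the random walk $\sum_{k=1}^n \xi_k$ where $\xi_k = \mathbf{1}[c_k\in U] - \mathbf{1}[s_k\in S(U)]$ and $(c_k,s_k)$ is the $k$-th arrival of law $\mu$. This walk has i.i.d.\ increments with mean $\mu_C(U) - \mu_S(S(U)) \geq 0$. Hence $\Delta(Z_n)\to +\infty$ a.s.\ when the mean is strictly positive, and $\limsup \Delta(Z_n) = +\infty$ a.s.\ when the mean is zero (by recurrence of the mean-zero walk on $\Z$ and the fact that $\Delta(Z_n)$ stays at or above $\Delta(Z_0)+\sum\xi_k$... more precisely one uses that a mean-zero non-degenerate walk is unbounded above). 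Since $\Delta$ is a fixed finite linear functional of the state, $\Delta(Z_n)\to\infty$ (or merely $\sup_n\Delta(Z_n)=\infty$) forces $\|Z_n\|\to\infty$ along a subsequence with probability one, so the chain visits any fixed finite set only finitely often; it therefore cannot be positive recurrent and in particular has no attractive stationary distribution, contradicting Definition~\ref{de-stab}. The degenerate case $\mu_C(U)=\mu_S(S(U))$ with $\xi_k\equiv 0$ a.s.\ cannot occur for a proper $U$: it would force $\mu$ to be supported off the ``mixed'' arrivals, but one still has $\mu_C(U)=\mu_S(S(U))$, and by symmetry the same argument applied to $V = S\setminus S(U)$ (using $C(V)\subset C\setminus U$) yields a non-trivial mean-zero or positive-mean walk unless $U$ splits the graph into a disconnected piece, which is excluded since $(C,S,E)$ is connected. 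The second line of {\sc NCond} is handled identically by the customer/server symmetry of the model, exchanging the roles of $C$ and $S$.

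A cleaner way to package the degenerate subtlety, which I would adopt if the case analysis above gets unwieldy, is to invoke the Strong Law of Large Numbers remark already made in the text: if {\sc NCond} fails, then with probability one the arrival counts up to time $n$ violate Hall's condition \eqref{eq-hall} for infinitely many $n$ (indeed eventually, in the strict-inequality case). Whenever \eqref{eq-hall} fails for the cumulative arrival vector, \emph{no} matching policy can have matched everything, so the buffer content at that time is nonzero and in fact grows at least linearly in the strict case; combined with the monotone functional $\Delta$ this pins down that the excess never returns to bounded values. Either route gives the contradiction. I would present the monotonicity-of-$\Delta$ argument as the main line since it is self-contained and makes the ``buffer never recovers'' mechanism transparent, relegating the SLLN observation to a remark.
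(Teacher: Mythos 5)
Your treatment of the strict-inequality case is correct and is essentially the paper's own argument: the SLLN applied to the arrival counts, with your per-step inequality $\Delta(Z_{n+1})-\Delta(Z_n)\ge \xi_{n+1}$ being a (correct) refinement of the paper's cumulative bound $X_n\ge A_n-B_n$; your five-case check against Definitions~\ref{def:bf}/\ref{def:bfc} does go through. The gaps are both in the boundary case $\mu_C(U)=\mu_S(S(U))$. First, from $\sup_n\Delta(Z_n)=+\infty$ you infer that the chain ``visits any fixed finite set only finitely often''; this is false. An irreducible positive recurrent chain on an infinite state space also satisfies $\limsup_n\|Z_n\|=+\infty$ a.s., so pathwise unboundedness along a subsequence contradicts nothing. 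In the zero-mean, non-degenerate case you only get a $\limsup$, and to conclude instability you need a distributional statement, e.g.\ that the dominated mean-zero walk $S_n$ has $P(S_n\ge M)\to 1/2$ for every fixed $M$ (CLT), hence $P(\Delta(Z_n)\ge M)$ is bounded away from $0$, so the Cesaro averages of the laws of $Z_n$ are not tight and cannot converge weakly to a stationary probability measure --- or else the paper's separate treatment of this case.

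Second, your claim that the fully degenerate sub-case $\xi_k\equiv 0$ a.s.\ (i.e.\ $\mu\bigl(U\times(S\setminus S(U))\bigr)=\mu\bigl((C\setminus U)\times S(U)\bigr)=0$) ``cannot occur for a proper $U$'' is wrong: connectedness constrains $E$, not $\mathrm{supp}(\mu)=F$. Take the N graph $C=\{1,2\}$, $S=\{1',2'\}$, $E=C\times S-\{(2,2')\}$, with $\mu(1,2')=\mu(2,1')=1/2$ and $U=\{2\}$, so $S(U)=\{1'\}$: then $\mu_C(U)=\mu_S(S(U))$ and $\xi\equiv 0$; moreover the chain started from the empty state stays empty forever and the cumulative arrivals never violate \eref{eq-hall}, so both of your mechanisms (escape to infinity, and the SLLN/Hall fallback) are silent. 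Yet the model is not stable: the state consisting of one class-$2$ customer and one class-$2'$ server is mapped to itself by every possible arrival, hence never drains to the empty state. The correct way to handle this regime is a third, monotonicity-based argument --- which your own functional supports: when $\xi\equiv 0$, your inequality shows $\Delta$ is non-decreasing along every trajectory, so Cesaro limits started from states with $\Delta>0$ (which exist with arbitrarily large $\Delta$) are supported on $\{\Delta>0\}$ and cannot agree with the limit from the empty state, contradicting attractiveness in Definition~\ref{de-stab}. This is the analogue of the paper's separate argument for the case $\mu(U\times V)=0$ (where it shows the number of $U$-customers never decreases from a positive initial value), and it is the piece your proof is missing.
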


\begin{proof}
We suppose that the conditions {\sc NCond}  are not satisfied. 

Assume first that there exists $U\subset C$ such that $\mu_C(U) >
\mu_S(S(U))$. 
Let $A_n$, resp. $B_n$, be the total numbers of customers of type $U$,
resp. servers of type $S(U)$, to arrive in the system up to time
$n$. Let $X_n$ be the number of customers of type $U$ present in the
system at time $n$. By definition, $X_n \geq A_n - B_n$. By the Strong
Law of Large Numbers, we have, a.s., 
\[
\lim_n \frac{A_n}{n} = \mu_C(U), \ \lim_n \frac{B_n}{n} = \mu_S(S(U)),
\quad \lim_n \frac{X_n}{n} \geq \mu_C(U) - \mu_S(S(U)) > 0 \:.
\]
So the Markov chain is transient. Similarly, if there exists $V\subset
S$ such that $\mu_S(V) > \mu_C(C(V))$, the model is unstable. (This
part of the argument appears in \cite[Prop. 3.4]{CKW09}.) 

Assume now that there exists $U\subset C, U \neq C,$ such that 
\begin{equation}\label{eq-1}
\mu_C(U) = \mu_S(S(U))\:.
\end{equation}
Observe that $S(U)\neq S$, otherwise we would have
$\mu_C(U)=\mu_S(S)=1$ which would contradict $U\neq C$.  
Set $V=S-S(U)$. Eqn (\ref{eq-1}) is equivalent to: $\mu_S(V) =
\mu_C(C-U)$. 
The bipartite matching graph $(C,S,E)$ is represented in Figure
\ref{fi-help}. By assumption, $U\times V \cap E = \emptyset$. 

 \begin{figure}[htb]
   \centering
   \includegraphics[width=.4\textwidth]{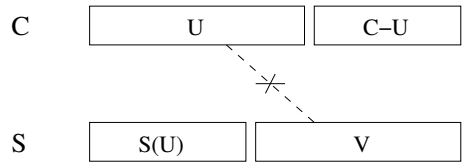}
   \caption{\label{fi-help}The bipartite graph $(C,S,E)$.}
 \end{figure}

We have
\begin{eqnarray*}
\mu(U\times V) & = & \mu_C(U) - \mu(U\times S(U)) \\
\mu((C-U)\times S(U)) & = & \mu_S(S(U)) - \mu(U\times S(U))\:.
\end{eqnarray*}
Using Eqn (\ref{eq-1}), we get
\begin{equation}\label{eq-2}
\mu(U\times V) = \mu ((C-U)\times S(U) ) \:.
\end{equation}

Let $D_n$ be the number of departures
of type $(C-U)\times S(U)$ up to time $n$. 
Let $A_n$, $B_n$, and $X_n$ be defined as above. 
Set $Z_n = A_n - B_n$. For an arrival of type
$U\times V$, the $Z$-process makes a +1 jump, for an arrival of type
$(C-U)\times S(U)$, the $Z$-process makes a -1 jump, otherwise the
$Z$-process remains constant. We have $X_n \geq A_n - (B_n - D_n) \geq Z_n$.
We have two cases:
\begin{itemi}
 \item[-] If $\mu(U\times V) >0$, then, according to \eref{eq-2}, the $Z$-process is null recurrent.
 \item[-] If $\mu(U\times V) =0$, then for any initial condition such that 
$X_0 > 0$, a.s. $X_n \geq X_0, \; \forall n$.
\end{itemi}
Hence, in both cases, the model cannot be stable.
\end{proof}

\begin{rem}\label{re-wlog2}
Consider a {\em non-connected} matching graph
$(C,S,E)$. Consider a probability
$\mu$ and an admissible matching policy  such that the bipartite
matching model is
stable. 
Let $(C',S',E')$ be a connected subgraph of $(C,S,E)$. 
Following the exact same steps as in the proof of Lemma \ref{le-le},
we prove that 
\[
\mu_C(C')=\mu_S(S'), \quad \mu(C' \times (S-S'))=0, \quad \mu((C-C') \times S')=0
\]
(otherwise the Markov chain is
either transient or null recurrent). Therefore, we can decompose 
the model into connected components and treat them separately. 
Hence the assumption of connectedness of $(C,S,E)$ in
Def. \ref{de-structure} was made
without loss of generality.
\end{rem}

\subsection{Complexity of verifying {\sc NCond}} 

Let us fix $(C,S,E)$ and the probability
measures $(\mu_C,\mu_S)$ such that $\mbox{supp}(\mu_C)=C,
\ \mbox{supp}(\mu_S)=S$. We want an efficient algorithm to decide 
if the conditions {\sc NCond} are satisfied. 

\medskip

The number of inequalities in {\sc NCond} is exponential in $|C| +
|S|$. So checking directly if all the inequalities are satisfied is a method
whose time complexity is exponential in $|C| +
|S|$. To go beyond, we need additional material. 

\medskip

We use the standard terminology of network flow theory, see for
instance \cite{FoFu}.
Consider the directed graph
\begin{equation}\label{eq-dg}
\cN= \bigl(C\cup S \cup \{i,f\}, E\cup\{(i,c), c\in
C\} \cup \{(s,f), s\in S\}\bigr)\:. 
\end{equation}
Endow the arcs of $E$ with infinite capacity, an arc of type $(i,c)$
with capacity $\mu_C(c)$, and an arc of type $(s,f)$ with capacity
$\mu_S(s)$. 

 \begin{figure}[htb]
   \centering
   \includegraphics[width=.2\textwidth]{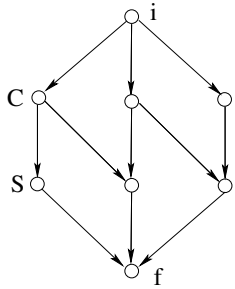}
   \caption{\label{fi:maxmin}The graph $\cN$ associated
with the NN model of Figure \ref{fig:NN}.}
 \end{figure}

Recall that a {\em  cut} is a subset of the arcs
whose removal disconnects $i$ and $f$. The {\em capacity} of a cut is
the sum of the capacities of the arcs. 
Set $A= E\cup\{(i,c), c\in
C\} \cup \{(s,f), s\in S\}$.
Recall that $T: A \rightarrow \R_+$ is a
{\em flow} if: (i) $\forall
c, T(i,c)=\sum_{s\in S(c)} T(c,s), \ \forall s, \sum_{c\in
  C(s)}T(c,s)=T(s,f)$; (ii) $\forall (x,y) \in E$, $T(x,y)$ is less or
equal to the capacity of $(x,y)$. The {\em value} of $T$ is $\sum_{c}
T(i,c)=\sum_{s}T(s,f)$. 

\medskip

Let $\mbox{{\sc NCond}}_{\leq}$ be the set of inequalities obtained
from {\sc NCond} by replacing the strict inequalities by large
inequalities. 

\begin{lemma}\label{le-flow}
There exists a flow of value 1 in $\cN$ iff $(\mu_C,\mu_S)$ satisfies
$\mbox{{\sc NCond}}_{\leq}$. 
There exists a flow $T$ of value 1 such that $T(c,s)>0$ for all
$(c,s)\in E$ iff $(\mu_C,\mu_S)$ satisfies {\sc NCond}. 
\end{lemma}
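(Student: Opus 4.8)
The plan is to exploit the max-flow/min-cut theorem to relate the existence of flows in $\cN$ to the inequalities in $\mbox{{\sc NCond}}_\le$ and {\sc NCond}. First I would observe that any flow in $\cN$ has value at most $1$, since the arcs $(i,c)$ form a cut of total capacity $\sum_c \mu_C(c)=1$; likewise the arcs $(s,f)$ form a cut of capacity $1$. So a flow of value $1$ is a maximum flow that saturates both of these cuts, i.e. $T(i,c)=\mu_C(c)$ for all $c$ and $T(s,f)=\mu_S(s)$ for all $s$. By the max-flow/min-cut theorem, a flow of value $1$ exists iff every cut has capacity at least $1$. The core combinatorial step is to show that the minimum cut capacity being $\ge 1$ is equivalent to $\mbox{{\sc NCond}}_\le$. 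Since arcs of $E$ have infinite capacity, any finite-capacity cut uses only arcs of the form $(i,c)$ and $(s,f)$; such a cut is determined by a set $U\subset C$ (the $c$ with $(i,c)$ \emph{not} cut) together with a set $W\subset S$ (the $s$ with $(s,f)$ \emph{not} cut), and for it to disconnect $i$ from $f$ we need $S(U)\subset W$, i.e.\ no infinite-capacity arc survives from the $i$-side to the $f$-side. The capacity of this cut is $\mu_C(C\setminus U)+\mu_S(W)=1-\mu_C(U)+\mu_S(W)$, which is minimized, for fixed $U$, by taking $W=S(U)$. Hence the minimum cut capacity is $\min_{U\subset C}\bigl(1-\mu_C(U)+\mu_S(S(U))\bigr)$, and this is $\ge 1$ iff $\mu_C(U)\le\mu_S(S(U))$ for all $U\subset C$. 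This gives one half of $\mbox{{\sc NCond}}_\le$; the dual half ($\mu_S(V)\le\mu_C(C(V))$ for all $V\subset S$) follows by the symmetric argument, or by noting it is automatically implied (taking $U=C\setminus C(V)$ in the first family yields $\mu_C(C\setminus C(V))\le\mu_S(S(C\setminus C(V)))\le\mu_S(S\setminus V)$, equivalently $\mu_S(V)\le\mu_C(C(V))$). The cases $U=C$ and $U=\emptyset$ give the trivial inequality $1\le 1$, matching the fact that {\sc NCond} only quantifies over proper subsets. This proves the first assertion.

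For the second assertion, I would argue that a flow of value $1$ with $T(c,s)>0$ on every arc of $E$ exists iff all the inequalities hold strictly. One direction: if such a strictly positive flow $T$ exists, then for any proper $U\subsetneq C$ we have $\mu_C(U)=\sum_{c\in U}T(i,c)=\sum_{c\in U}\sum_{s\in S(c)}T(c,s)$, which is strictly less than $\sum_{c\in C(S(U))\supset U}\sum_{s\in S(U)}T(c,s)=\sum_{s\in S(U)}\mu_S(s)=\mu_S(S(U))$, because connectedness of $(C,S,E)$ forces the existence of at least one arc $(c',s')\in E$ with $s'\in S(U)$ but $c'\notin U$ (otherwise $U$ would be a connected component), and $T(c',s')>0$ contributes to the right side but not the left. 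The dual inequality is symmetric. For the converse, suppose {\sc NCond} holds; then $\mbox{{\sc NCond}}_\le$ holds, so by the first part there is a flow $T_0$ of value $1$. The idea is to perturb $T_0$ by adding a small positive circulation that puts positive flow on every edge: since $(C,S,E)$ is connected, pick a spanning tree and, for each non-tree edge, route a tiny amount of flow around the fundamental cycle it creates; more directly, for each $(c,s)\in E$ add $\epsilon$ along the path $c\to s$ and compensate by rerouting through the source/sink side. One clean way: consider the flow $\frac{1}{N}\sum$ over a suitable finite family of unit $i$–$f$ paths that collectively cover all of $E$ and whose average marginal flow on each $(i,c)$, $(s,f)$ stays below capacity — strict inequality in {\sc NCond} is exactly what guarantees room to do this. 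Then $T=(1-\delta)T_0+\delta T_1$ for small $\delta>0$ has value $1$, respects capacities (strictly on the border arcs, which is fine), and is strictly positive on $E$.

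I expect the main obstacle to be the converse direction of the second assertion: constructing, from the strict inequalities, an explicit value-$1$ flow that is positive on every edge of $E$. The subtlety is that one must add the perturbing circulation without violating the capacity constraints on the arcs $(i,c)$ and $(s,f)$, and the margin available is precisely $\mu_S(S(U))-\mu_C(U)>0$ and its duals — so the argument must be set up so that the total extra flow pushed across any cut is controlled by these slacks. A cycle-cancellation / residual-graph argument makes this rigorous: in the residual network of $T_0$, the strict {\sc NCond} inequalities imply that every edge of $E$ lies on some augmenting cycle of strictly positive residual capacity, and pushing an infinitesimal amount around each such cycle keeps the value at $1$ while making the flow positive on $E$; connectedness of $(C,S,E)$ is what ensures these cycles exist and can be chosen to cover all of $E$. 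Everything else is routine bookkeeping with the max-flow/min-cut theorem and the definition of a cut in $\cN$.
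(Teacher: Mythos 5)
Your proof of the first assertion follows the paper's route: max-flow/min-cut, the observation that a finite-capacity cut uses only source and sink arcs, and the computation of the minimal cut capacity; the computation is correct (one wording slip: with your capacity formula $\mu_C(C\setminus U)+\mu_S(W)$ and the condition $S(U)\subset W$, the set $W$ must be the set of $s$ whose arcs $(s,f)$ \emph{are} cut, not the ones left uncut). Your forward direction of the second assertion is also sound and is essentially the paper's argument in direct rather than contrapositive form: a value-$1$ flow saturates all arcs $(i,c)$ and $(s,f)$, and connectedness supplies an edge of $E$ between $C\setminus U$ and $S(U)$ carrying positive flow, which makes the inequality strict.

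The genuine gap is the converse of the second assertion, which you yourself flag as the main obstacle: the strictly positive value-$1$ flow is never actually constructed. The sketch you settle on rests on the claim that every zero edge of $E$ lies on an augmenting cycle of the residual graph, and you attribute this to connectedness of $(C,S,E)$; that justification is wrong. In the residual graph of a value-$1$ flow all arcs $(i,c)$ and $(s,f)$ are saturated, so any augmenting cycle must alternate forward edges of $E$ with backward arcs carrying positive flow, and whether such a cycle through a given edge $(c_0,s_0)$ exists is a quantitative question, not a connectivity one. The claim is in fact true under {\sc NCond}: letting $R$ be the set of vertices reachable from $s_0$ in the residual graph, if $c_0\notin R$ then flow conservation gives $\mu_S(R\cap S)\le\mu_C(R\cap C)$, while $S(R\cap C)\subset R\cap S$ and $\emptyset\neq R\cap C\subsetneq C$, contradicting $\mu_C(R\cap C)<\mu_S(S(R\cap C))$ — but some such cut/reachability argument must be supplied, and your alternative sketches (spanning tree with fundamental cycles, a family of unit $i$--$f$ paths covering $E$) beg the same question. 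The paper sidesteps residual graphs entirely: it injects the uniform flow $T_\eta$ ($\eta$ on each edge of $E$), passes to the rescaled measures $\widetilde\mu_C,\widetilde\mu_S$ of \eref{eq-tilde}, which still satisfy {\sc NCond} because these are open conditions, applies the first assertion to obtain a value-$1$ flow $\widetilde T$ for the reduced capacities, and sets $T=T_\eta+(1-|E|\eta)\widetilde T$. Either route closes the gap; as written, this direction of your proof is incomplete.
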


The first part of Lemma \ref{le-flow} is proved in
\cite[Prop. 3.7]{CKW09}. We repeat the argument for
completeness. 

\begin{proof}
The 
celebrated Max-flow Min-cut Theorem~\cite{FoFu} states that the maximal
value of a 
flow is equal to the minimal capacity of a cut. Observe that the set of arcs $\{(i,c),c\in C\}$ forms a cut of
capacity 1. Therefore the maximal
flow is $\leq 1$ and it is 1 iff all cuts have a capacity $\geq 1$.

To be of finite capacity, a cut must contain only customer arcs $\{(i,c),c\in
C\}$ and server arcs $\{(s,f), s\in S\}$. Consider a subset $\cA = \{(i,c),c\in
C_1\}\cup \{(s,f), s\in S_1\}$. Set $C_2=C-C_1$ and $S_2=S-S_1$. The set $\cA$ is a cut iff $C_2\times
S_2 \cap E = \emptyset$, equivalently iff $S(C_2)\subset S_1$ and
$C(S_2)\subset C_1$. Also the capacity of $\cA$ is $\mu_C(C_1) +
\mu_S(S_1)$.

 \begin{figure}[htb]
   \centering
   \includegraphics[width=.3\textwidth]{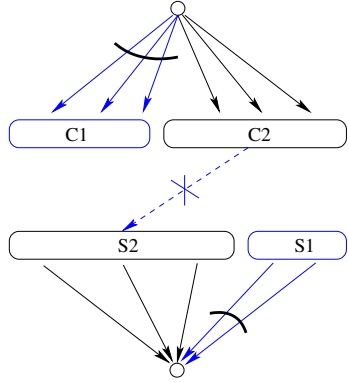}
   \caption{\label{fi:flow}Illustration of the proof of Lemma \ref{le-flow}.}
 \end{figure}

Assume that the cut $\{(i,c),c\in
C_1\}\cup \{(s,f), s\in S_1\}$ is of capacity strictly
less than 1. We have
\begin{equation*}
\mu_C(C_1) + \mu_S(S_1) < 1  
\iff  \mu_C(C_1) < \mu_S ( S_2)\:.
\end{equation*}
But $C(S_2)\subset C_1$ so, if $\mbox{{\sc NCond}}_{\leq}$ is satisfied,
we must have: 
\[
\mu_S ( S_2) \leq \mu_C ( C(S_2)) \leq \mu_C (C_1) \:.
\]
So we have proved that $\mbox{{\sc NCond}}_{\leq}$ is not satisfied. 

The other way round, if $\mbox{{\sc NCond}}_{\leq}$ is not satisfied,
then there exist $C_1,S_2,$ $C(S_2)=C_1$ such that $\mu_C(C_1) <
\mu_S(S_2)$. Set $C_2=C-C_1$ and $S_1=S-S_2$. By definition,
$C_2\times S_2 \cap E = \emptyset$, therefore $\{(i,c),c\in
C_1\}\cup \{(s,f), s\in S_1\}$ is a cut. Its capacity is $\mu_C(C_1) +
\mu_S(S_1) <1$. 

By contrapposing the above, we get that:
\[
\Bigl[ \mbox{{\sc NCond}}_{\leq} \mbox{ satisfied } \Bigr] \iff \Bigl[ \mbox{
  all cuts have a capacity } \geq 1  \Bigr] \iff \Bigl[ \mbox{ maximal
    flow is } 1 \Bigr]\:.
\]

We now prove the second part of the lemma. 
Assume that the conditions {\sc NCond} are not satisfied. If the 
conditions $\mbox{{\sc NCond}}_{\leq}$ are not satisfied either, then by the
first part of the proof there exists no flow of value 1. Assume now
that the conditions  $\mbox{{\sc NCond}}_{\leq}$ are satisfied. 
Then there
exists $U\subset C, U \neq C$, such that 
$\mu_C(U) = \mu_S(S(U))$.
Let
$T$ be any flow of value 1. Using the flow relation for $U$, we get:
\[
\sum_{(c,s)\in U \times S(U)} T(c,s) \ = \sum_{(i,c)\in \{i\} \times U}
T(i,c) = \mu_C(U)\:.
\]
Using $\mu_C(U) = \mu_S(S(U))$ and the flow relation for $S(U)$, we deduce that:
\[
\Bigl[ \sum_{(c,s)\in U \times S(U)} T(c,s) = \mu_S(S(U)) \ \Bigr] \quad \implies
\quad  \Bigl[  \sum_{(c,s)\in (C-U) \times S(U)} T(c,s) = 0 \ \Bigr] \:.
\]
Now it follows from the connectedness of $(C,S,E)$ that $(C-U) \times
S(U)\cap E \neq \emptyset$. We conclude that the flow $T$ is such
that $T(c,s) =0$ for some $(c,s)\in E$. 

\medskip

Assume now that the conditions {\sc NCond} are satisfied. 
Fix $\eta$ such that $0<\eta < 1/|E|$. Consider the function $T_{\eta}:A \rightarrow \R_+$ defined by 
\[
T_{\eta}(x,y)=\begin{cases}
\eta & \mbox{for } (x,y)=(c,s)\in E \\
|S(c)|\ \eta  & \mbox{for } (x,y)=(i,c) \\
|C(s)|\ \eta  & \mbox{for } (x,y)=(s,f) \:.
\end{cases}
\]
By construction $T_{\eta}$ is a flow. Set 
\begin{equation}\label{eq-tilde}
\widetilde{\mu}_C (c) = \frac{ \mu_C(c) - |S(c)|\eta }{1- |E|\eta}, \qquad
\widetilde{\mu}_S (s) = \frac{ \mu_S(s) - |C(s)|\eta }{1- |E|\eta} \:.
\end{equation}
For $\eta$ small enough, observe that $\widetilde{\mu}_C$, resp.  $\widetilde{\mu}_S$, is a
probability measure on $C$, resp. on $S$. 
Choose $\eta$ small enough such that $(\widetilde{\mu}_C,
\widetilde{\mu}_S)$ satisfies {\sc NCond}. This is possible since the
conditions {\sc NCond} are open conditions. 

Consider the directed graph $\cN$, see \eref{eq-dg}, with new capacities on the
customer and server arcs defined by $\widetilde{\mu}_C$ and 
$\widetilde{\mu}_S$. 
By applying the first part of the proof, there exists a flow
$\widetilde{T}: A \rightarrow \R_+$ of value 1. Define 
\[
T: A \rightarrow \R_+, \qquad T =  T_{\eta} + (1- |E|\eta)
\widetilde{T} \:.
\]
By construction $T$ is a flow for the graph $\cN$
with the original capacity constraints ($\mu_C$ for the customer arcs
and $\mu_S$ for the server arcs). The value of $T$ is 1 and it
satisfies $T(x,y)>0$ for all $(x,y)\in E$. 
This completes the proof.
\end{proof}

There exist algorithms to find the maximal flow which are polynomial
in the size of the underlying graph, independent of the  arc capacities. 
For instance, the classical ``augmenting path algorithm'' of Edmonds
$\&$ Karp~\cite{EdKa} operates in $O((|C|+|S|)|E|^2)$ time, and there
exist more sophisticated algorithms operating in $O((|C|+|S|)^3)$
time. 

Take one of these polynomial algorithms, call it {\sc
  MaxFlow} and consider it as a
blackbox. We build on this to design a polynomial algorithm to check
{\sc NCond}.  Let us detail the
construction.

\begin{lemma}\label{le-trick}
Define $(\widetilde{\mu}_C,\widetilde{\mu}_S)$ as in
\eref{eq-tilde}. The pair  $(\mu_C,\mu_S)$ satisfies  {\sc
  NCond} iff the pair $(\widetilde{\mu}_C,\widetilde{\mu}_S)$ satisfies  {\sc
  NCond} for  $\eta$ strictly positive and small enough. 
\end{lemma}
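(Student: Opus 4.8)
The statement claims that the strict inequalities {\sc NCond} hold for $(\mu_C,\mu_S)$ if and only if they hold for the perturbed pair $(\widetilde\mu_C,\widetilde\mu_S)$ defined in \eref{eq-tilde}, provided $\eta>0$ is small enough. The natural approach is to show that {\sc NCond} for $(\mu_C,\mu_S)$ is \emph{equivalent} to $\mbox{{\sc NCond}}_{\leq}$ for $(\widetilde\mu_C,\widetilde\mu_S)$, and then to appeal to continuity/openness to replace the large inequalities by strict ones in the perturbed system.

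First I would record the elementary algebraic identity: for any $U\subsetneq C$,
\[
\widetilde\mu_S(S(U)) - \widetilde\mu_C(U)
= \frac{\bigl(\mu_S(S(U)) - \mu_C(U)\bigr) + \eta\bigl(\sum_{c\in U}|S(c)| - \sum_{s\in S(U)}|C(s)|\bigr)}{1-|E|\eta},
\]
and the symmetric identity for any $V\subsetneq S$. The denominator is positive for $\eta<1/|E|$, so the sign of each expression is governed by the numerator. The key observation is that $\sum_{c\in U}|S(c)|$ counts edges of $E$ incident to $U$, hence is at most $|E|$, while $\sum_{s\in S(U)}|C(s)|$ counts edges incident to $S(U)$; since every edge incident to a vertex of $U$ lands in $S(U)$, we have $\sum_{c\in U}|S(c)| \le \sum_{s\in S(U)}|C(s)|$, so the coefficient of $\eta$ in the numerator is $\le 0$ (and, by connectedness together with $U\neq C$, strictly negative — there is an edge from $C-U$ into $S(U)$). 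Thus for every such $U$,
\[
\widetilde\mu_S(S(U)) - \widetilde\mu_C(U) > 0 \quad\Longleftrightarrow\quad \mu_S(S(U)) - \mu_C(U) \ge -\,c_U\,\eta
\]
for an appropriate $c_U>0$, which for $\eta$ small enough is equivalent to $\mu_S(S(U)) - \mu_C(U) \ge 0$, i.e. to the corresponding (non-strict) {\sc NCond}$_\le$ inequality for the original pair. But in fact, since the sign coefficient of $\eta$ is strictly negative, the perturbed strict inequality holds as soon as the \emph{non-strict} original inequality holds; conversely if the original strict inequality holds it certainly holds non-strictly, and if the original inequality is violated ($\mu_S(S(U)) < \mu_C(U)$) then for $\eta$ small the numerator is still negative.

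Putting this together over all $U\subsetneq C$ and $V\subsetneq S$ (a finite collection, so a single $\eta_0>0$ works uniformly): if $(\mu_C,\mu_S)$ satisfies {\sc NCond}, then a fortiori it satisfies {\sc NCond}$_\le$, and the computation shows $(\widetilde\mu_C,\widetilde\mu_S)$ satisfies {\sc NCond} for $0<\eta<\eta_0$; conversely, if $(\mu_C,\mu_S)$ violates {\sc NCond}, there is some $U$ (or $V$) with $\mu_C(U)\ge\mu_S(S(U))$, and then the numerator above is $\le -c_U\eta < 0$, so $(\widetilde\mu_C,\widetilde\mu_S)$ violates {\sc NCond} as well. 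I would also note in passing that for $\eta$ small enough $\widetilde\mu_C,\widetilde\mu_S$ are genuine probability measures (already observed in the proof of Lemma~\ref{le-flow}), so the statement is non-vacuous.

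The only mildly delicate point — the ``main obstacle'' such as it is — is making sure the coefficient of $\eta$ in each numerator is strictly negative rather than merely non-positive, which is exactly where the connectedness hypothesis on $(C,S,E)$ and the restriction $U\subsetneq C$, $V\subsetneq S$ enter: connectedness forbids $S(U)=S$ with $U\neq C$ having no incoming edge from $C-U$, so $(C-U)\times S(U)\cap E\neq\emptyset$. This is the same mechanism already used at the end of the proof of Lemma~\ref{le-flow}, so it can be cited rather than reproved. Everything else is bookkeeping with a finite family of linear inequalities and a single small parameter $\eta$.
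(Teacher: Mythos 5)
Your algebraic identity for the perturbed difference is right, and your backward direction (if the original pair violates {\sc NCond}, so does the perturbed one) is exactly the paper's argument --- note that it only uses $U\subset C(S(U))$, i.e. the coefficient of $\eta$ being $\le 0$, so connectedness and strict negativity are not needed there. The genuine problem is in your forward direction, where you invert the effect of the perturbation. With your own notation the numerator is $\bigl(\mu_S(S(U))-\mu_C(U)\bigr)-c_U\eta$ with $c_U\ge 0$, so the correct equivalence is $\widetilde{\mu}_S(S(U))-\widetilde{\mu}_C(U)>0 \iff \mu_S(S(U))-\mu_C(U)>c_U\eta$, not ``$\ge -c_U\eta$'': the perturbation makes every inequality \emph{harder} to satisfy (the side $S(U)$ loses at least as much mass as the side $U$), it does not relax it. Consequently the claims that ``the perturbed strict inequality holds as soon as the non-strict original inequality holds'' and that perturbed {\sc NCond} for small $\eta$ is equivalent to $\mbox{{\sc NCond}}_{\leq}$ for the original pair are false: in the boundary case $\mu_C(U)=\mu_S(S(U))$ one gets $\widetilde{\mu}_S(S(U))-\widetilde{\mu}_C(U)=-c_U\eta/(1-|E|\eta)\le 0$, so the perturbed pair \emph{violates} {\sc NCond} --- which is what the lemma asserts and what your own backward computation shows, so your intermediate claim contradicts both the lemma and your own second half.

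The fix is immediate and brings you back to the paper's route: to show that original {\sc NCond} implies perturbed {\sc NCond}, use the strict inequalities themselves, not $\mbox{{\sc NCond}}_{\leq}$. Each of the finitely many conditions has a positive margin $m_U=\mu_S(S(U))-\mu_C(U)>0$ (resp. $m_V$), so any $\eta$ smaller than the minimum of $m_U/c_U$ over the conditions with $c_U>0$ (and small enough that $\widetilde{\mu}_C,\widetilde{\mu}_S$ are probability measures) works; equivalently, invoke, as the paper does in one line, that {\sc NCond} is a finite family of open conditions and $(\widetilde{\mu}_C,\widetilde{\mu}_S)\to(\mu_C,\mu_S)$ as $\eta\to 0$. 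Your observation that $c_U>0$ for nonempty proper $U$ by connectedness is correct but is not required for this lemma.
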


\begin{proof}
Assume that $(\mu_C,\mu_S)$ satisfies  {\sc
  NCond}. Since we are dealing with open conditions, any small enough
perturbation of  $(\mu_C,\mu_S)$ still satisfies {\sc
  NCond}. 

Assume now that $(\mu_C,\mu_S)$ does not satisfy  {\sc
  NCond}. There exists $U\subset C, U \neq C,$ such that $\mu_C(U)
\geq \mu_S(S(U))$. By using \eref{eq-tilde}, we get
\begin{eqnarray*}
\bigl(1-|E|\eta \bigr) \ \widetilde{\mu}_C(U) + \Bigl(\sum_{c\in U}|S(c)| \Bigr) \ \eta & \geq & 
\bigl(1-|E|\eta \bigr) \ \widetilde{\mu}_S(S(U)) + \Bigl(\sum_{s\in
  S(U)}|C(s)| \Bigr) \ \eta  \\
\bigl(1-|E|\eta \bigr) \ \widetilde{\mu}_C(U) + \bigl| E\cap (U\times
S(U)) \bigr| \ \eta & \geq &  
\bigl(1-|E|\eta \bigr) \ \widetilde{\mu}_S(S(U)) + \bigl| E\cap (C(S(U))\times S(U))\bigr| \  \eta \:.
\end{eqnarray*}
By definition, we have $U \subset C(S(U))$. We conclude that
$\widetilde{\mu}_C(U) \geq \widetilde{\mu}_S(S(U))$. So the pair
$(\widetilde{\mu}_C,\widetilde{\mu}_S)$ does not satisfy {\sc
  NCond}.
\end{proof}

Using Lemmas
\ref{le-flow} and \ref{le-trick}, {\sc NCond} is satisfied iff {\sc
  MaxFlow}$(\cN,\widetilde{\mu}_C,\widetilde{\mu}_S)$ returns 1
for $\eta$ small enough. 
So the trick is to run {\sc MaxFlow} on the input
$(\cN,\widetilde{\mu}_C,\widetilde{\mu}_S)$ by considering $\eta$ as a
formal parameter made ``as small as needed''. 

The precise meaning is the following. If 
$x_1,x_2,y_1,y_2 \in \R$, then: $(x_1+ y_1\eta) + (x_2+ y_2\eta) =
(x_1+x_2) + (y_1+y_2)\eta$. Furthermore,
\begin{eqnarray}\label{eq-eta}
\bigl[ x_1 + y_1\eta = x_2 + y_2 \eta \bigr] & \iff &  \bigl[
  x_1=x_2, \ y_1 = y_2 \bigr]  \nonumber \\
\bigl[ x_1 + y_1\eta < x_2 + y_2 \eta \bigr] & \iff &  \Bigl[
  (x_1< x_2) \ \mbox{ or } \  (x_1 = x_2, \ y_1 < y_2) \Bigr]
\end{eqnarray}
So $\eta$ is small enough not to reverse any strict inequality. When
running {\sc  MaxFlow} on
$(\cN,\widetilde{\mu}_C,\widetilde{\mu}_S)$, the algorithm deals with
values of the type $(x+y\eta)$, and adds and compare them according to
the above rules. Now observe that the algorithm stops in finite time,
so it will have performed only a finite number of
operations. Therefore, it would be possible, a posteriori, to assign to
$\eta$ a value which would be small enough to enforce \eref{eq-eta}. 

\medskip

\begin{algorithm}
\KwData{$(C, S, E)$, $(\mu_C,\mu_S)$ such that $\mbox{supp}(\mu_C)=C$,
  $\mbox{supp}(\mu_S)=S$. 
}
\KwResult{``Yes'' if {\sc NCond}, ``No'' if $\neg$({\sc NCond})}
\Begin{Compute $\cN$, $\widetilde{\mu}_C$, $\widetilde{\mu}_S$\;
\eIf{{\sc MaxFlow}$(\cN,\widetilde{\mu}_C,\widetilde{\mu}_S) =1$}{{\sc
    Result} $\leftarrow$ Yes\;}
{{\sc
    Result} $\leftarrow$ No\;}
}
\Return{{\sc Result}\;}
\caption{Checking the necessary stability conditions}
\label{al:ncond}
\end{algorithm}

The termination is obvious and the correctness follows from Lemmas
\ref{le-flow} and \ref{le-trick}. 

\begin{proposition}
Given a bipartite model $[(C,S,E),\mu]$, there exists an algorithm of time complexity $O((|C|+|S|)^3)$ to
decide if {\sc NCond} is satisfied. 
\end{proposition}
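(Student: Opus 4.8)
The plan is simply to verify the correctness and the complexity bound of Algorithm~\ref{al:ncond}. Correctness is essentially already established: by Lemma~\ref{le-trick}, the pair $(\mu_C,\mu_S)$ satisfies {\sc NCond} if and only if the perturbed pair $(\widetilde{\mu}_C,\widetilde{\mu}_S)$ satisfies {\sc NCond} for $\eta>0$ small enough; and by the first part of Lemma~\ref{le-flow}, applied to the graph $\cN$ endowed with the capacities $\widetilde{\mu}_C,\widetilde{\mu}_S$ on the customer and server arcs, the latter holds if and only if the maximal flow in that network has value $1$. Hence the only thing to justify is that running {\sc MaxFlow} on the symbolic input $(\cN,\widetilde{\mu}_C,\widetilde{\mu}_S)$, with $\eta$ treated as a formal positive infinitesimal as explained around~\eref{eq-eta}, returns the same value it would return for a genuine small numerical $\eta$ — together with the running-time estimate.

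For the symbolic part, I would first note that every quantity manipulated by a max-flow algorithm on this input stays affine in $\eta$, i.e. of the form $x+y\eta$ with $x,y\in\R$: the arc capacities are of this form (with the $E$-arcs carrying $+\infty$, which, as observed in the proof of Lemma~\ref{le-flow}, is never part of a finite cut and is handled correctly by any standard implementation), and this set is closed under the additions and comparisons the algorithm performs, using the arithmetic rules and the equivalences in~\eref{eq-eta}. Since {\sc MaxFlow} halts after finitely many elementary steps, it makes only finitely many comparisons; therefore there is a threshold $\eta_0>0$ such that for every $\eta\in(0,\eta_0)$ the numerical execution trace coincides with the symbolic one, and in particular the returned value is the same. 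Combining this with Lemmas~\ref{le-flow} and~\ref{le-trick}, {\sc MaxFlow} returns $1$ exactly when {\sc NCond} holds, which is what the algorithm tests.

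For the complexity: building $\cN$ costs $O(|C|+|S|+|E|)$ and reading off the coefficient pairs defining $\widetilde{\mu}_C,\widetilde{\mu}_S$ costs $O(|C|+|S|)$. Each arithmetic operation on a pair $x+y\eta$ — an addition, or a comparison resolved through~\eref{eq-eta} — is $O(1)$, so the symbolic run of {\sc MaxFlow} has the same asymptotic cost as a numerical run. Taking for {\sc MaxFlow} one of the standard maximum-flow algorithms whose running time is polynomial in the number of vertices and independent of the arc capacities, say an $O(n^3)$ algorithm on an $n$-vertex graph, and using that $\cN$ has $|C|+|S|+2$ vertices, yields the claimed bound $O((|C|+|S|)^3)$. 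The only point requiring any care — and it is bookkeeping rather than mathematics — is the handling of the formal parameter $\eta$: one must check that the algorithm never needs to compare two values whose relative order genuinely depends on the unfixed magnitude of $\eta$, which is exactly the content of the "affine in $\eta$" invariant together with~\eref{eq-eta}. Everything else is immediate from the two lemmas.
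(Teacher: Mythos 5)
Your argument is correct and follows the paper's own route: correctness via Lemmas \ref{le-flow} and \ref{le-trick}, the treatment of $\eta$ as a formal parameter compared through \eref{eq-eta} with only finitely many operations performed before the algorithm halts, and a capacity-independent $O((|C|+|S|)^3)$ max-flow routine run on $\cN$; you merely spell out the symbolic bookkeeping (all values affine in $\eta$, constant cost per operation, a posteriori threshold $\eta_0$) in more detail than the paper does. The one imprecision, which the paper's own wording shares, is that the first part of Lemma \ref{le-flow} characterizes $\mbox{{\sc NCond}}_{\leq}$ rather than {\sc NCond} for $(\widetilde{\mu}_C,\widetilde{\mu}_S)$; the two coincide for the formally perturbed pair because an equality $\widetilde{\mu}_C(U)=\widetilde{\mu}_S(S(U))$ would force both $\mu_C(U)=\mu_S(S(U))$ and $E\cap\bigl((C-U)\times S(U)\bigr)=\emptyset$, contradicting the connectedness of $(C,S,E)$.
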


\section{Connectivity properties of the Markov chain}\label{se-utc}

Define the following
property for the transition graph of the Markov chain:
\begin{center}
{\sc UTC} : a unique (terminal) strictly connected component with all states leading
to it. 
\end{center}
Property {\sc UTC} is necessary for stability as
defined in Def. \ref{de-stab}. 
But property {\sc UTC} is not granted in bipartite matching models and counterexamples are given
below (Examples \ref{ex-nonirred1} and \ref{ex-nonirred2}). In fact, we will see that we are in an unusual situation: the
necessary stability conditions {\sc NCond} turn out to be sufficient
conditions for the property {\sc UTC} (Theorem \ref{th-P})! 
Observe also that property {\sc UTC} is weaker than irreducibility, and
we will give an example of a model satisfying {\sc NCond} and {\sc UTC} without being
irreducible (Example \ref{ex-Pnonirred}). 

\subsection{Stable structures}

To establish property {\sc UTC}, we make a detour by introducing and
studying a notion of independent interest: stable structures. 

\begin{definition}
A bipartite matching structure $(C,S,E,F)$ is {\em stable}
if there exists a probability measure $\mu$ satisfying
(\ref{eq-3cond}) and  whose marginals
$\mu_C$ and $\mu_S$ satisfy {\sc NCond}. 
\end{definition}

The justification for this terminology will appear in \S \ref{se-ml}:
we prove there that under the ML policy, any model
satisfying {\sc NCond} is stable. So a  structure is stable iff there 
exists an associated model which is stable.

\medskip

First of all, there exist stable structures. 

\begin{ex}\label{ex-irred}
Consider $(C,S,E,C\times S)$, where $(C,S,E)$ is the NN bipartite
graph of Figure \ref{fig:NN}. Let
\[
\mu_C: \ \mu_C(1)=\mu_C(2)=2/5, \ \mu_C(3)=1/5, \qquad \mu_S:
\ \mu_S(1')=\mu_S(2')=2/5, \  \mu_S(3')=1/5 \:. 
\]
The product measure $\mu=\mu_C \times\mu_S$ has marginals $\mu_C$ and $\mu_S$ 
and we check that $(\mu_C,\mu_S)$ satisfy {\sc NCond}. Also it is
easily proved that for any admissible matching policy, the graph of
the Markov chain is irreducible. 
\end{ex}

On the other hand, there exist unstable structures. 
We illustrate this on two examples. 

\begin{ex}\label{ex-nonirred1}
Consider the structure $(C,S,E,F)$ where $(C,S,E)$ is the NN graph of
Figure \ref{fig:NN}, and where 
\[
F = \bigl\{ (1,3'), (2,2'), (3,1') \bigr\}\:.
\]
Consider any $\mu$ with $\mbox{supp}(\mu)=F$. We have
$\mu_C(1)=\mu_S(3')=\mu(1,3')$ which violates {\sc NCond} for
$V=\{3'\}$. We can also prove that the property {\sc UTC} is not
satisfied. 
Consider a state of the type $(x,y)$ with $x=y=(0,0,k)$, for some $k\geq 0$. Any one of the three possible arrivals leave 
the state unchanged. In particular, there is an infinite number of
terminal components. 
\end{ex}

\begin{ex}\label{ex-nonirred2}
Consider the bipartite matching structure defined in Figure
\ref{fig:non-irred}. 
The graph $(C,S,E)$ is represented on the left of
the figure, while the graph  $(C,S,F)$ is represented on the right.

 \begin{figure}[htb]
   \centering
   \includegraphics[width=.6\textwidth]{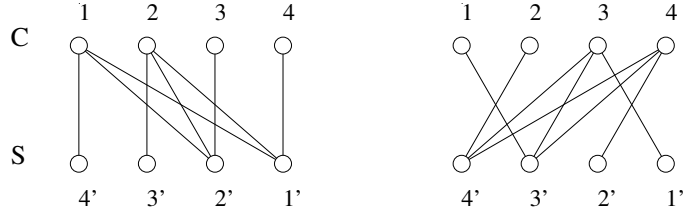}
   \caption{\label{fig:non-irred}The matching graph $(C,S,E)$ on the
     left, and the arrival graph $(C,S,F)$ on the right.}
 \end{figure}

Consider any $\mu$ with $\mbox{supp}(\mu)=F$. We have 
\[
\mu_S(\{1',2'\}) = 
\mu(3,1') + \mu(4,2') \leq \mu_C(\{3,4\})\:,
\]
which contradicts  {\sc NCond} for $U=\{3,4\}$. We can also prove that
the property {\sc UTC} is not satisfied. 
Consider  a state $(x,y)$ with $x_3+x_4 = k >0$. 
Reducing the number of customers of types 3/4 would require an arrival of type $(1,1')$ or $(1,2')$ or
$(2,1')$ or $(2,2')$. But none of these pairs belong to $F$. Therefore
it is impossible to reach a state $(x',y')$ with $x'_3+x'_4
<x_3+x_4$. On the other hand an arrival of type $(3,3')$ or $(3,4')$ or
$(4,3')$ or $(4,4')$ strictly increases the number of customers of
types 3/4. Hence all the states are transient, and there is no
terminal strongly connected component. 
\end{ex}

Stability of a structure is a decidable property. There 
exists a probability measure $\mu$ with the requested properties iff
the following system of linear inequalities in the
indeterminates $\mu(c,s), \ c\in C, s\in S,$ have a solution:
\begin{equation}
\begin{cases}
 \sum_{(c,s)\in C\times S} \mu(c,s) = 1, & \\
 \mu(c,s) > 0, & \forall (c,s) \in F,\\
 \mu(c,s) =0, & \forall (c,s) \in C\times S - F,\\
\mu_C(c) = \sum_{s\in S} \mu(c,s),  & \forall c\in C, \\
\mu_S(s) = \sum_{c\in C} \mu(c,s), &  \forall s \in S, \\
\mbox{{\sc NCond}} \:.
\end{cases}
\end{equation}
However, the number of inequalities  is exponential in $|C| +
|S|$. 
We are going to propose a  criterion which is much simpler, both
conceptually and algorithmically. 

\medskip

Consider a bipartite matching structure $(C,S,E,F)$. Define
$\widetilde{F}= \{(s,c) \mid (c,s)\in F\}$. Associate with the
structure the {\bf directed} graph $(C\cup S, E\cup \widetilde{F})$, in
other words the nodes are $C\cup S$ and the arcs are
\begin{equation*}
c \longrightarrow s, \quad  \mbox{if }  (c,s) \in E, \qquad 
s \longrightarrow c, \quad  \mbox{if }  (c,s) \in F \:.
\end{equation*}

We have represented in Figure \ref{fig:non-irred2} the directed graph
associated with the structure of Example~\ref{ex-nonirred2}. 

 \begin{figure}[htb]
   \centering
   \includegraphics[width=.4\textwidth]{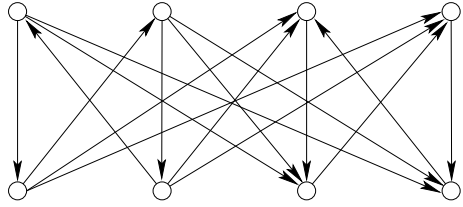}
   \caption{\label{fig:non-irred2}The directed graph associated
     with the structure of Figure \ref{fig:non-irred}.}
 \end{figure}

The graph of Figure \ref{fig:non-irred2} is not strongly connected: the four nodes on
the right  form a strongly connected
component. Similarly, the directed graph associated with the structure
of Example \ref{ex-nonirred1} is not strongly connected. On the other
hand, the directed graph associated with the structure of
Example \ref{ex-irred} is strongly connected. 
This is not a coincidence. 

\begin{theorem}\label{th-2}
Let $(C,S,E,F)$ be a bipartite matching structure. The following two properties are equivalent: 
\begin{enumerate}
\item $(C,S,E,F)$ is a stable structure; 
\item $(C\cup S, E\cup\widetilde{F})$ is strongly connected. 
\end{enumerate}
\end{theorem}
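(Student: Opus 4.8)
The plan is to prove the two implications separately, and in both directions the bridge between the combinatorial condition (strong connectivity of the directed graph $(C\cup S, E\cup\widetilde F)$) and the analytic condition (existence of $\mu$ with $\mathrm{supp}(\mu)=F$ whose marginals satisfy {\sc NCond}) will go through the flow characterization of Lemma \ref{le-flow}. Recall that by Lemma \ref{le-flow}, $(\mu_C,\mu_S)$ satisfies {\sc NCond} iff the network $\cN$ of \eref{eq-dg} admits a flow of value $1$ which is strictly positive on every arc of $E$; and by Lemma \ref{le-le} (applied via Remark \ref{re-wlog2}), any stable structure must have $\mu(C'\times(S-S'))=\mu((C-C')\times S')=0$ for every ``one-sided closed'' set, which is exactly the obstruction we want to rule out.

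First I would prove $2\Rightarrow 1$ (strong connectivity implies stable). The idea is to \emph{construct} a suitable $\mu$ directly from a strongly connected structure. Since $(C\cup S, E\cup\widetilde F)$ is strongly connected and finite, one can decompose it into (directed) cycles that together cover every arc; more precisely, assign to each arc a strictly positive weight so that the resulting weighting is a circulation (flow conservation at every node). Concretely: for each arc pick a directed cycle through it (exists by strong connectivity), put weight $\varepsilon$ on every arc of that cycle, and sum over all arcs; the result is a circulation that is strictly positive on all of $E\cup\widetilde F$. Reading off the weights on the $\widetilde F$-arcs (i.e. the $s\to c$ arcs coming from $F$) gives a nonnegative matrix supported exactly on $F$; normalize it to a probability measure $\mu$. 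Flow conservation at each customer node $c$ says that the total weight entering $c$ (from $\widetilde F$-arcs, i.e. $\mu_C(c)$ up to scaling) equals the total leaving $c$ (along $E$-arcs); conservation at each server node gives the dual statement. This is precisely a flow of value $1$ in $\cN$ that is strictly positive on every arc of $E$, so by Lemma \ref{le-flow} its marginals satisfy {\sc NCond}, and $\mathrm{supp}(\mu)=F$, $\mathrm{supp}(\mu_C)=C$, $\mathrm{supp}(\mu_S)=S$ hold because the circulation hits every arc and every node. Hence the structure is stable.

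Next, $1\Rightarrow 2$ (stable implies strongly connected), which I would argue by contraposition. Suppose $(C\cup S, E\cup\widetilde F)$ is \emph{not} strongly connected. Pick a terminal strongly connected component $R$ (one with no arcs leaving it), and let $U=R\cap C$, $V=R\cap S$. Because $R$ is closed under outgoing arcs: no $E$-arc leaves $U$ other than into $V$, so $S(U)\subseteq V$; and no $\widetilde F$-arc leaves $V$ other than into $U$, so $C(V)\subseteq U$, i.e. $F\cap((C-U)\times V)=\emptyset$. Now take any $\mu$ with $\mathrm{supp}(\mu)=F$ and marginals satisfying {\sc NCond}${}$. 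From $\mathrm{supp}(\mu)=F$ and $F\cap((C-U)\times V)=\emptyset$ we get $\mu(U\times V)=\mu_S(V)$, hence $\mu_S(V)\le\mu_C(C(V))\le\mu_C(U)$ with the first inequality forced to be an equality when combined with {\sc NCond}${}$'s requirement $\mu_S(V)\le\mu_C(C(V))$; chasing this shows either $\mu_C(U)=\mu_S(V)$ with $U\ne C$ (violating the strict inequality in {\sc NCond} for $V$, or symmetrically for $U$), or one can extract a set on which {\sc NCond} fails outright. One must also handle the boundary cases $U=C$ or $V=S$: if $R$ is terminal and equals all of one side, strong connectivity of the \emph{matching} graph $(C,S,E)$ (assumed in Def. \ref{de-structure}) plus the no-isolated-vertices assumption on $(C,S,F)$ forces $R$ to be everything, contradicting non-strong-connectivity. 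Either way no admissible $\mu$ can satisfy {\sc NCond}, so the structure is unstable.

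The main obstacle I anticipate is the bookkeeping in the $1\Rightarrow2$ direction: correctly identifying $U=R\cap C$ and $V=R\cap S$ for a \emph{terminal} SCC $R$, verifying that ``closed under outgoing arcs'' translates to exactly $S(U)\subseteq V$ and $C(V)\subseteq U$ (which one gets $E$-arcs and which one gets $\widetilde F$-arcs depends on the orientation convention $c\to s$ for $E$ and $s\to c$ for $F$), and then squeezing the right contradiction with {\sc NCond} out of $\mu_S(V)=\mu(U\times V)$ while separating the degenerate cases where $U$ or $V$ is the full vertex set — those need the connectivity hypothesis on $(C,S,E)$ and the no-isolated-vertex hypothesis on $(C,S,F)$, both of which are in Def. \ref{de-structure}. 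The $2\Rightarrow1$ direction is conceptually the cleaner one, modulo making the ``cover every arc by a cycle and superpose'' construction precise and checking positivity on all arcs and both vertex sets.
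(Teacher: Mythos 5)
Your $2\Rightarrow 1$ direction is correct, and it takes a genuinely different route from the paper: you build a strictly positive circulation on $(C\cup S, E\cup\widetilde{F})$ by superposing one directed cycle per arc, read the $\widetilde{F}$-weights as $\mu$ and the $E$-weights (suitably rescaled) as a flow of value $1$ in $\cN$ that is positive on every arc of $E$, and then invoke Lemma \ref{le-flow}. The paper instead obtains $\mu$ from a Perron--Frobenius left eigenvector of the product $AB$ of two row-stochastic matrices supported on $E$ and $\widetilde{F}$, and only then exhibits the flow; both arguments funnel through the second part of Lemma \ref{le-flow}, and your cycle-superposition version is arguably the more elementary of the two. (Do note explicitly that strong connectivity gives every customer an incoming $\widetilde{F}$-arc and every server an outgoing one, which is what makes $\mbox{supp}(\mu)=F$, $\mbox{supp}(\mu_C)=C$, $\mbox{supp}(\mu_S)=S$ automatic.)

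In the $1\Rightarrow 2$ direction, however, the concluding inequality chase does not work as written. With $R$ a terminal SCC, $U=R\cap C$, $V=R\cap S$, terminality only forbids arcs \emph{leaving} $R$; it gives you $S(U)\subseteq V$ and $F\cap((C-U)\times V)=\emptyset$, but it does \emph{not} give $C(V)\subseteq U$ for the $E$-neighborhood $C(V)$, since $E$-arcs may enter $R$ from outside. Your chain $\mu_S(V)\le\mu_C(C(V))\le\mu_C(U)$ therefore conflates the $F$-neighborhood of $V$ (which is contained in $U$) with the $E$-neighborhood $C(V)$ (which is what {\sc NCond} refers to), and the subsequent ``forced equality, violating the strict inequality for $V$'' step does not yield a contradiction. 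The repair is short and uses only what you already set up, but on the other side: from $\mbox{supp}(\mu)=F$ and $F\cap((C-U)\times V)=\emptyset$ you get $\mu_S(V)=\mu(U\times V)\le\mu_C(U)$, while $S(U)\subseteq V$ and {\sc NCond} applied to $U$ (a nonempty proper subset of $C$, by the boundary analysis you sketched using connectedness of $(C,S,E)$ and absence of isolated vertices in $(C,S,F)$) give $\mu_C(U)<\mu_S(S(U))\le\mu_S(V)$, a contradiction. This repaired argument is essentially the paper's: there the out-closed set is taken to be the successor set of a customer node $c$ rather than a terminal SCC, and {\sc NCond} is applied to $C_1=C\cap\mbox{succ}(c)$ in exactly this way.
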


In particular, one can decide if a structure is  stable with an
algorithm of time complexity $O(|C||S|)$ by testing the strong
connectivity of $(C\cup S, E\cup\widetilde{F})$. 

\medskip

\begin{proof}[Proof of Theorem \ref{th-2}]
Assume that $(C,S,E,F)$ is a stable structure. Let $\mu$ be a
probability measure satisfying (\ref{eq-3cond}) and {\sc NCond}. 
Suppose that there exist $c\in C, s\in
S$, with no directed
path from $c$ to $s$ in $(C\cup S, E\cup\widetilde{F})$. Let
$\mbox{succ}(c)$ be the set of nodes that can be reached starting from
$c$ in $(C\cup S, E\cup\widetilde{F})$. Set
\[
C_1= C \cap \mbox{succ}(c), \ S_1 = S \cap
\mbox{succ}(c), \ C_2 = C - C_1, \ S_2 = S - S_1 \:.
\]
By assumption, $s\in S_2$. The following two properties hold:
\begin{equation*}
\mu(C_2,S_1) = 0, \qquad (C_1\times S_2)\cap E = \emptyset \:.
\end{equation*} 
Using $\mu(C_2,S_1) = 0$, we get
\[
\mu_S(S_1)= \mu (C_1,S_1) \leq \mu_C(C_1) \:.
\]
But using $[ (C_1\times S_2)\cap E = \emptyset]$ and {\sc NCond} for
$U=C_1$, we get 
\[
\mu_C(C_1) < \mu_S(S(C_1)) = \mu_S(S_1) \:.
\]
{From} this contradiction, we deduce that for all $c\in C, s\in S$, there exists a directed
path from $c$ to $s$ in $(C\cup S, E\cup\widetilde{F})$. Similarly, we can prove that for all $s\in S, c\in
C$, there exists a directed
path from $s$ to $c$ in $(C\cup S, E\cup\widetilde{F})$. 

\medskip

Assume now that $(C\cup S, E\cup\widetilde{F})$ is strongly
connected. 
Consider the matrices $A\in \R_+^{C\times S}$ and $B\in \R_+^{S\times
  C}$ defined by 
\[
A_{cs} = \begin{cases} 1/|S(c)| & \mbox{if } (c,s) \in E \\
                         0 & \mbox{otherwise}
\end{cases}, \qquad B_{sc} = \begin{cases} 1/\#\{d, (s,d)\in \widetilde{F}\} & \mbox{if } (s,c) \in \widetilde{F} \\
                         0 & \mbox{otherwise}
\end{cases}
\:.
\]
Consider the matrix $AB \in \R_+^{C\times C}$. By construction, we
have $(AB)_{cd}>0$ if and only if there is a path of length 2 from $c$
to $d$ in the graph $(C\cup S, E\cup\widetilde{F})$. Since $(C\cup S,
E\cup\widetilde{F})$ is strongly connected, we deduce that $AB$ is
irreducible. Clearly the spectral radius of $AB$ is 1. Applying the Perron-Frobenius Theorem~\cite{sene}, we
obtain the existence of a line vector $x\in \R_+^C$ such that: $\forall
c, x_c>0$, $\sum_c x_c =1$, and $xAB=x$.  Set $y=xA$. Define the
probability measure $\mu$ on $C\times S$ by $\mu(c,s)= y_sB_{sc}$. 
By construction, we have $\mu_C=x, \ \mu_S=y$. Also, by construction,
$\mbox{supp}(\mu)=F$. 
Define the function
$T:A \rightarrow \R_+$ by 
\[
\forall c\in C, \ T(i,c) = x_c, \quad \forall s \in S, \ T(s,f) =
y_s, \quad \forall (c,s) \in E, \ T(c,s) = x_cA_{cs}\:.
\]
By construction, $T$ is a flow of value 1 such that $T(c,s)>0$ for all
$(c,s)\in E$. Using Lemma \ref{le-flow}, we get that
$(x,y)=(\mu_C,\mu_S)$ satisfies {\sc NCond}. 
\end{proof}

\subsection{Back to property {\sc UTC}}

We now have all the ingredients needed to prove the following result. 

\begin{theorem}\label{th-P}
Consider a bipartite matching model $[(C,S,E),\mu,\mbox{{\sc Pol}}]$. 
Assume that the structure $(C,S,E,F)$ is stable, equivalently
that $(C\cup S, E\cup\widetilde{F})$ is strongly connected. Then 
the transition
graph of the Markov chain of the bipartite matching model satisfies the
property {\sc UTC}.
\end{theorem}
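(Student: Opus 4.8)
I would first reduce property {\sc UTC} to one reachability claim: \emph{the empty state $\emptyset$ (empty buffer) is reachable, in the transition graph of the Markov chain, from every state $q\in\CE$.} Granting this, let $\mathcal T$ be the strongly connected component of $\emptyset$. If $q\in\mathcal T$ and $q\to q'$ in one step, then $q'$ reaches $\emptyset$ (by the claim) while $\emptyset$ reaches $q$ hence $q'$; so $q'\in\mathcal T$ and $\mathcal T$ is closed, i.e. terminal. Every state reaches $\emptyset\in\mathcal T$. And if $\mathcal T'$ is any terminal component, picking $q'\in\mathcal T'$ we get $q'\to^{*}\emptyset$, and closedness of $\mathcal T'$ forces $\emptyset\in\mathcal T'$, so $\mathcal T'=\mathcal T$; thus $\mathcal T$ is the unique terminal component with all states leading to it, which is exactly {\sc UTC}.

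\textbf{Reachability of $\emptyset$ by induction on the buffer size.} Since $\mathrm{supp}(\mu)=F$, every arrival in $F$ has positive probability, so it suffices to produce, for each $q=(u,v)$, a finite list of arrivals in $F$ driving $q$ to $\emptyset$; I induct on $n:=|u|=|v|$, the case $n=0$ being trivial. For $n\ge 1$, let $U\subset C$ and $V\subset S$ be the classes actually present, so $U,V\neq\emptyset$ and $U\times V\cap E=\emptyset$. Fix $c\in U$, $s\in V$. By Theorem~\ref{th-2} the digraph $(C\cup S,E\cup\widetilde F)$ is strongly connected, hence contains a directed path $c=c_0\to s_1\to c_1\to\cdots\to c_{\ell-1}\to s_\ell=s$ with $(c_{i-1},s_i)\in E$ for $1\le i\le\ell$ and $(c_i,s_i)\in F$ for $1\le i\le\ell-1$; since $(c,s)\notin E$ we have $\ell\ge 2$. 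Feed the arrivals $(c_1,s_1),\dots,(c_{\ell-1},s_{\ell-1})\in F$ in order. At each step $i$ the server $s_i$ finds a match inside the buffer: a $c_0$-customer at step $1$, and the $c_{i-1}$-customer appended at the previous step for $i\ge 2$, using $(c_{i-1},s_i)\in E$. If at some step the incoming $c_i$ also matches a buffered server, that transition removes one buffered customer and one buffered server (the case $|u|_{C(s_i)}\neq 0,\ |v|_{S(c_i)}\neq 0$ of Def.~\ref{def:bf}/\ref{def:bfc}), the size drops to $n-1$, and we are done by the induction hypothesis. Otherwise the transition merely replaces a buffered customer by a $c_i$-customer and \emph{leaves the server contents unchanged}; hence $s\in V$ is still present, and since $(c_{\ell-1},s)\in E$, the step $i=\ell-1$ is forced into the size-decreasing case. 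Either way we reach a state of size $n-1$, and the induction closes.

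\textbf{Independence of the policy; the main obstacle.} The matching policy enters only through which buffered customer/server is picked when several matches are available, and this never affects the size bookkeeping above, so the argument is verbatim the same for ML, MS, FIFO, LIFO, {\sc Random} and priorities. The sole place the hypothesis is used is the existence of the directed path, i.e. strong connectivity of $(C\cup S,E\cup\widetilde F)$, equivalently stability of the structure. I expect the genuine content (the ``hard part'') to be precisely finding this routing trick: walk a path of the structure graph \emph{on the customer side only}, freezing the server contents, so that the path's terminal $E$-arc is guaranteed to trigger a full, size-reducing match. A secondary point requiring a little care is the bookkeeping showing that no step can ``fall through'' past $i=\ell-1$, and that all intermediate states are legal states of $\CE$ (immediate, since $\odot$ maps $\CE\times(C\times S)$ into $\CE$).
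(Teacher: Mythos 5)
Your proposal is correct and follows essentially the same route as the paper: reduce {\sc UTC} to reachability of the empty state, induct on the buffer size, and use a directed path in $(C\cup S,E\cup\widetilde F)$ to feed $F$-arrivals whose servers consume buffered customers (leaving the server content frozen) until a size-reducing double match is triggered. The only cosmetic difference is that you start the path at a present customer class in $\Cb$ and end at a present server class in $\Sb$, disposing of premature matches by early termination via the induction hypothesis, whereas the paper runs the path from $\Sz$ to $\Cz$ and instead extracts a subpath avoiding $\Cz$ and $\Sz$.
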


\begin{proof}[Proof of Theorem \ref{th-P}]
We are going to prove that the empty state can be reached
starting from any state. This is a sufficient condition for property
{\sc UTC} to hold. The unique terminal strongly connected component is
the set of states that can be reached from the empty state. 

\medskip

We carry out the proof in the
commutative case, but it works unchanged in the non-commutative
case (the only information needed is the number of customer/server of
each class). Consider a non-empty state $(X,Y)$, with $X=(x_c)_{c\in C}$ and $Y=
(y_s)_{s\in S}$. It is sufficient to prove that we can always reach a
state $(X',Y')$ such that $|X'|<|X|$. 

If there exists $(c,s)\in \Cz\times \Sz$ such that
$\mu(c,s) > 0$, then the proof is completed. 
Assume now that $\mu(\Cz\times \Sz) = 0$. Choose $(c,s)\in
\Cz\times \Sz$. By assumption, there exists a path from
$s$ to $c$ in $(C\cup S, E\cup\widetilde{F})$. Let us denote it by
$(s=s_1,c_1,s_2,c_2,\dots, s_k,c_k=c)$. 
Assume that $c_1,\dots, c_{k-1}\not\in \Cz$ and 
$s_2,\dots, s_{k}\not\in \Sz$. 
(If not, consider a subpath with this property.) Assume further that $c_1 \in \Cr$.
(If $c_1 \in \Cb$, then $(c_1, s_2) \in E$ implies $s_2 \in \Cz$, so we can consider the subpath
$(s=s_2,c_2,\dots, s_k,c_k=c)$.) 
Since $(s_i,c_i) \in \widetilde{F}$ and $(c_i,s_{i+1}) \in E$ by construction and since $c_1 \in \Cr$, we get that 
$c_1,\dots, c_{k-1}\in \Cr$ and $s_2,\dots, s_{k}\in \Sr$.

 \begin{figure}[htb]
   \centering
   \includegraphics[width=.9\textwidth]{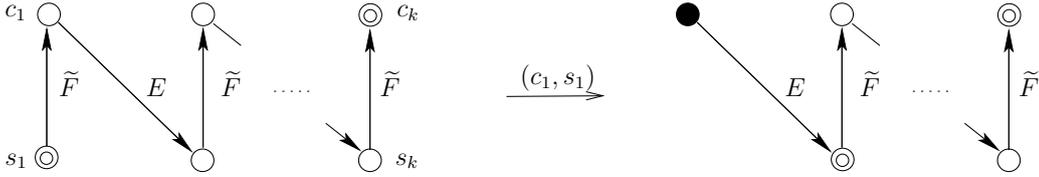}
   \caption{\label{fi:path}The path $(s=s_1,c_1,s_2,c_2,\dots,
     s_k,c_k=c)$ in $(C\cup S, E\cup\widetilde{F})$.}
 \end{figure}

By definition of the graph
$(C\cup S, E\cup\widetilde{F})$, we have $\mu(c_i,s_i)>0$ for all $i$. 
Choose the sequence of arrivals $(c_1,s_1), \dots, (c_k,s_k)$. 
Consider the effect of the arrival of $(c_1,s_1)$. Since $s_1\in
\Sz$, it will be matched with a customer of $C_{\bullet}$ (and
not with $c_1$, even
if $(c_1,s_1)\in E$, since an admissible matching policy is always {\em
  buffer-first}, see \S \ref{sse-match}). Since $c_1\not\in \Cz$, it will remain
unmatched. Let $(X^{(1)},Y^{(1)})$ be the new state. We have
$|X^{(1)}|=|X|$. Also, in the new state, we have $c_1\in C^{(1)}_{\bullet}$,
which implies that $s_2 \in \Sz^{(1)}$. So we can repeat the
argument inductively. After the arrivals of $(c_1,s_1), \dots,
(c_{k-1},s_{k-1})$, we are in a state $(X^{(k-1)},Y^{(k-1)})$ satisfying:
\[
|X^{(k-1)}|=|X|, \quad s_k \in \Sz^{(k-1)}, \quad c_k \in
\Cz^{(k-1)}\:.
\]
Therefore, after the arrival of $(c_k,s_k)$, we end up in a state
$(X^{(k)},Y^{(k)})$ such that $|X^{(k)}| = |X| -1$. 
This completes the proof.
\end{proof}

\begin{ex}\label{ex-Pnonirred}
Consider a bipartite matching model associated with the structure
$(C,S,E,F)$ where $(C,S,E)$ is the NN graph of 
Figure \ref{fig:NN}, and where 
\[
F = \bigl\{ (1,1'), (2,2'), (3,3') \bigr\}\:.
\]
The graph $(C\cup S, E\cup\widetilde{F})$ is strongly
connected. According to Theorem \ref{th-2}, the graph satisfies
property {\sc UTC}. But it is not irreducible. Indeed, it is impossible to
reach the state $((0,1,0);(0,0,1))$ starting from the empty
state. More generally, none of the states of the facet
$(\{2\},\{3'\})$ belong to the terminal strongly connected component. 
\end{ex}

Below, we study the stability of bipartite matching models. 
Therefore, we always assume that the necessary conditions {\sc NCond}
are satisfied. So we get the property {\sc UTC} for the Markov chain as a
consequence of Theorem \ref{th-P}.

\section{Models that are stable for all admissible policies}
 \label{se-suff}

\begin{definition}
Consider a bipartite graph $(C,S,E)$ and an admissible matching
policy {\sc Pol}. The {\em  stability region} is the set of values of $\mu$ 
for which 
the bipartite matching model $[(C,S,E),\mu,\mbox{{\sc Pol}}]$ is stable. 
\end{definition}

The stability region is included in the polyhedron defined by {\sc NCond}. 
The stability region is {\em maximal} if it is equal to this polyhedron. 

\medskip

Denote by ${\mathfrak F}$ the set of 
facets. 
Define the following conditions on $\mu$:

\begin{equation}
\label{eq:SNat}
\textrm{{\sc SCond}}: \quad 
\begin{array}{ll}
\mu_C(\Cz(\cF)) + \mu_S(\Sz(\cF)) \ > \ 1 - \mu(E \cap \Cr(\cF)
\times \Sr(\cF)), \quad \forall \cF \in {\mathfrak F}-\{\emptyset\}
\end{array}
\end{equation}

\medskip

Let $\cF$ be a saturated facet, see Definition \ref{de-saturated}. 
Assume for instance that $\Cr(\cF)=\emptyset$. 
Then $E\cap \Cr(\cF) \times \Sr(\cF) = \emptyset$ and $\Cz(\cF)=C -
\Cb(\cF)$. So \eref{eq:SNat} implies: 
\[
\mu_S(\Sz(\cF)) \ > \ \mu_C(\Cb(\cF)) \:.
\]
Since $\Sz(\cF)=S(\Cb(\cF))$, we recognize exactly \eref{eq:CNat} for
$U=\Cb(\cF)$. Conversely, consider $U\subsetneq C$ and the associated
condition in {\sc NCond}: $\mu_C(U) < \mu_S(S(U))$. 
Choose a state with a strictly positive number of customers/servers
for the classes $U$ and  $S-S(U)$.
Let $\cF$ be the corresponding facet. The facet $\cF$ is
saturated: $\Sz (\cF)=S(U), \ \Sb (\cF) = S -S(U), \ \Sr (\cF)
=\emptyset$. Let us apply \eref{eq:SNat} to the facet $\cF$, we get: 
\begin{eqnarray*}
\mu_C(\Cz(\cF)) + \mu_S(S(U)) & > & 1  \\
           \mu_S(S(U))      & > & \mu_C(C - \Cz(\cF)) \ \ \geq \ \ \mu_C(U) \:.
\end{eqnarray*}

To summarize, the
subset of the inequalities \eref{eq:SNat}
obtained by considering only the saturated facets gives precisely the
inequalities {\sc NCond}. 

\medskip

We now show that the conditions {\sc SCond} are sufficient stability
conditions. 

\begin{proposition}
\label{pr:CF}
A bipartite model with probability $\mu$ satisfying {\sc SCond} is
stable under any admissible matching policy.   
\end{proposition}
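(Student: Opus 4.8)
The natural approach is a Lyapunov–Foster argument with the linear function $L(X,Y) = |X| = \sum_{c\in C} x_c$ (which equals $\sum_{s\in S} y_s$). Since the state space has infinitely many points but the chain satisfies {\sc UTC} by Theorem \ref{th-P} (as {\sc SCond} implies {\sc NCond}, hence stability of the structure via the saturated facets), it suffices to show that $L$ has uniformly negative drift outside a finite set. The drift of $L$ at a state $(X,Y)$ depends only on the facet $\cF$ to which $(X,Y)$ belongs once $(X,Y)$ is ``large enough'' in every non-zero class: an arrival $(c,s)$ decreases $|X|$ by one exactly when both $c\in\Cz(\cF)$ and $s\in\Sz(\cF)$ (the new customer is matched with a buffered server and vice versa); it increases $|X|$ by one exactly when $(c,s)\in (C\times S - E)\cap(\Cr(\cF)\times\Sr(\cF))$ — i.e. a genuinely new unmatchable pair — plus some boundary cases; and in all remaining cases $|X|$ is unchanged. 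Hence, for states deep inside a facet $\cF$,
\[
\expct{L(X_{n+1}) - L(X_n) \mid (X_n,Y_n)=(X,Y)} \;=\; \mu\bigl((\Cr(\cF)\times\Sr(\cF)) - E\bigr) \;-\; \mu(\Cz(\cF)\times\Sz(\cF)).
\]

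First I would make precise the phrase ``deep inside a facet'': I would define, for each non-zero facet $\cF=(U,V)$, the subset of $\CE$ consisting of states with $x_c \ge N$ for all $c\in U$ and $y_s\ge N$ for all $s\in V$, for a threshold $N$ to be chosen (in fact $N=1$ suffices here because matching policies are buffer-first and only remove one item per step, but taking $N$ a bit larger streamlines the argument). On this subset the drift formula above holds exactly, with no dependence on the policy $\Phi,\Psi$ — this is the key point that makes the conclusion uniform over all admissible policies. Then I would rewrite the drift as $1 - \mu(E\cap(\Cr(\cF)\times\Sr(\cF))) - \mu(\Cz(\cF)\times\Sz(\cF)) - \mu\bigl(\text{rest}\bigr)$; more carefully, partitioning $C\times S$ according to where $c$ and $s$ land among $\Cb,\Cz,\Cr$ and $\Sb,\Sz,\Sr$, one checks the up-jumps are exactly $(\Cr\times\Sr)\setminus E$ and the down-jumps are exactly $\Cz\times\Sz$, so the drift equals $\mu((\Cr\times\Sr)\setminus E) - \mu(\Cz\times\Sz)$. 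The hypothesis {\sc SCond}, namely $\mu_C(\Cz(\cF)) + \mu_S(\Sz(\cF)) > 1 - \mu(E\cap(\Cr(\cF)\times\Sr(\cF)))$, is precisely what forces this quantity to be strictly negative: since $\mu(\Cz\times\Sz) \ge \mu_C(\Cz) + \mu_S(\Sz) - 1 + \mu(E\cap(\Cr\times\Sr))$ — this last inequality coming from inclusion–exclusion on the complement, as the pairs avoiding both $\Cz$-customers and $\Sz$-servers that are not in $E\cap(\Cr\times\Sr)$ have total mass at most $1 - \mu(E\cap(\Cr\times\Sr))$ minus a bit — one gets $\mu(\Cz\times\Sz) - \mu((\Cr\times\Sr)\setminus E) \ge \mu_C(\Cz)+\mu_S(\Sz) - 1 + \mu(E\cap(\Cr\times\Sr)) > 0$ by {\sc SCond}. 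So set $\varepsilon = \min_{\cF\neq\emptyset}\bigl[\mu_C(\Cz(\cF))+\mu_S(\Sz(\cF)) - 1 + \mu(E\cap(\Cr(\cF)\times\Sr(\cF)))\bigr] > 0$, a finite minimum over finitely many facets.

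Finally, the finite ``exceptional set'' on which negative drift may fail is the set of states lying within distance $N$ of some coordinate hyperplane relative to their facet — but that set is infinite, so the argument needs the standard refinement: take the finite set $K = \{(X,Y)\in\CE : |X| \le M\}$ for $M$ large compared to $N|C\cup S|$. Any state outside $K$ must have many customers/servers total, hence — I would argue — lies deep inside its facet in the required sense (if $|X|$ is large then some non-zero class is large, and by the matching constraints the relevant $\Cz,\Sz$ masses are already determined; one must verify that having \emph{one} large class already suffices, which follows because the drift formula only needs $x_c\ge 1$ for $c\in U$ and $y_s\ge 1$ for $s\in V$, and $U,V$ are exactly the support of $X,Y$). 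With uniform negative drift $-\varepsilon$ outside the finite set $K$, bounded jumps ($|L(X_{n+1})-L(X_n)|\le 1$), and {\sc UTC}, the Foster–Lyapunov criterion gives positive recurrence of the terminal strongly connected component, hence a unique attractive stationary distribution, i.e. stability. The main obstacle I anticipate is the bookkeeping in verifying the exact drift formula case-by-case over the nine regions of $C\times S$ determined by the facet coloring, and in confirming that ``$|X|$ large'' does imply the drift formula applies — i.e. that one need not require every non-zero class to be large, only that the facet is correctly identified, which it always is since the facet of $(X,Y)$ is read off from the \emph{support} of $(X,Y)$, not the magnitudes.
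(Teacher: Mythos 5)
Your overall strategy --- the linear Lyapunov function $L=|X|$, the observation that the one-step drift is constant on each facet and independent of the admissible (buffer-first) policy, and Foster's criterion with an $\varepsilon$ uniform over the finitely many facets --- is exactly the paper's. But your case analysis of the jumps contains a genuine error, and it is in the direction that breaks the argument. The down-jumps are indeed exactly the arrivals in $\Cz\times\Sz$, but the up-jumps are \emph{not} only $(\Cr\times\Sr)- E$: an arrival $(c,s)$ with $c\in\Cb$ and $s\in\Sb\cup\Sr$, or with $c\in\Cr$ and $s\in\Sb$, also increases $|X|$ by $1$. (In each of these cases the arriving customer or server belongs to a class already present in the buffer, hence finds no match there, the other member of the pair matches nothing buffered either, and $(c,s)\notin E$ by the facet conditions; so both enter the buffer.) The correct drift on a non-zero facet $\cF$ is therefore $\mu((\Cr\times\Sr)- E)+\mu(\Cb\times\Sb)+\mu(\Cb\times\Sr)+\mu(\Cr\times\Sb)-\mu(\Cz\times\Sz)$, which simplifies to $1-\mu_C(\Cz)-\mu_S(\Sz)-\mu(E\cap(\Cr\times\Sr))$, i.e.\ it is negative precisely because of {\sc SCond}. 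Your quantity $\mu((\Cr\times\Sr)- E)-\mu(\Cz\times\Sz)$ is a strict underestimate of the drift whenever $\mu$ charges $\Cb\times(\Sb\cup\Sr)$ or $\Cr\times\Sb$ (which it does in general), so proving it negative --- which your inclusion--exclusion step does do correctly --- does not bound the actual drift, and the Foster step does not go through as written. The fix is simply to redo the nine-case bookkeeping (this is the paper's Table of variations); once the exact formula above is obtained, no inclusion--exclusion is needed, since {\sc SCond} is literally the statement that this drift is below $0$.

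A secondary remark: the ``deep inside the facet'' threshold $N$ and the exceptional set $K=\{|X|\le M\}$ are unnecessary complications. The facet of a state is read off from its support, and since an arrival removes at most one buffered customer and one buffered server, the drift formula holds exactly at \emph{every} state of the facet, regardless of the sizes of the positive coordinates. Hence the drift is at most $-\varepsilon$ at every non-empty state, with $\varepsilon$ the minimum of $\mu_C(\Cz(\cF))+\mu_S(\Sz(\cF))+\mu(E\cap(\Cr(\cF)\times\Sr(\cF)))-1$ over the finitely many non-zero facets, and the finite set in Foster's criterion can be taken to consist of the empty state alone; combined with {\sc UTC} (available since {\sc SCond} implies {\sc NCond}) this yields stability, exactly as in the paper. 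Your closing observation that the facet is determined by the support already contains this simplification.
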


\proof
Consider 
the linear Lyapunov function: 
$$
L(u,v) = |u|, \; (u,v) \in \CE,
$$
the number of unmatched customers (servers). 
Let $(U_n,V_n)_n$ be the Markov chain of the buffer-content. 
Let $\cF \not= \emptyset$ be an arbitrary and fixed facet. 
Then for any $(u,v) \in \cF$ we have (see Table~\ref{tab:LyapLin}):
\begin{eqnarray*}
&& \expct{L(U_{n+1},V_{n+1}) \; | \; (U_n, V_n) = (u,v)} - L(u,v) \ =  \ 
- \mu(\Cz(\cF), \Sz(\cF)) +  \mu(\Cr(\cF), \Sb(\cF)) \\
&&   +~\mu(\Cb(\cF), \Sb(\cF)) + \mu(\Cb(\cF), \Sr(\cF)) +  \mu(\Cr(\cF) \times \Sr(\cF) \cap E^c)\\
& = & 1 - \mu_C(\Cz(\cF)) - \mu_S(\Sz(\cF)) - \mu( \Cr(\cF) \times
\Sr(\cF) \cap E) \:.
\end{eqnarray*}
The inequality (\ref{eq:SNat}) implies directly
that: 
\begin{equation}\label{eq-foster}
\expct{L(U_{n+1},V_{n+1}) \; | \; (U_n, V_n) = (u,v)} - L(u,v) < \epsilon < 0 \:.
\end{equation}

By application of the Lyapunov-Foster Theorem, see for instance
\cite[\S 5.1]{brem99}, we conclude that the model is stable. 
\begin{table}[hbt]
\centering
\begin{tabular}{c|ccc}
               & $\Cz$ & $\Cr$ & $\Cb$ \\
\hline 
$\Sz$ & $-1$ & $0$ & $0$ \\
$\Sr$  & $0$ & $0$ or $1$ & $1$ \\
$\Sb$ & $0$ & $1$ & $1$
\end{tabular}
\caption{\label{tab:LyapLin}Variation of the linear Lyapunov function.}
\end{table}
\endproof

\begin{corollary}
Consider a bipartite graph in which any non-zero facet is
saturated. For any admissible matching policy, the stability region is 
maximal. 
\end{corollary}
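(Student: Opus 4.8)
The plan is to derive this corollary directly from Proposition \ref{pr:CF} by showing that, under the hypothesis that every non-zero facet is saturated, the conditions {\sc SCond} coincide with the conditions {\sc NCond}. Since the stability region is always contained in the polyhedron cut out by {\sc NCond} (Lemma \ref{le-le}), and Proposition \ref{pr:CF} gives stability whenever {\sc SCond} holds, establishing the equivalence $\textrm{{\sc SCond}} \iff \textrm{{\sc NCond}}$ under this hypothesis forces the stability region to equal that polyhedron for any admissible matching policy, which is precisely maximality.

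First I would recall the discussion preceding Proposition \ref{pr:CF}: it was already shown there that for a saturated facet $\cF$, say with $\Cr(\cF)=\emptyset$ (the case $\Sr(\cF)=\emptyset$ being symmetric), one has $E\cap(\Cr(\cF)\times\Sr(\cF))=\emptyset$ and $\Cz(\cF)=C-\Cb(\cF)$, so that \eref{eq:SNat} for $\cF$ reduces to $\mu_S(\Sz(\cF)) > \mu_C(\Cb(\cF))$, which is exactly the {\sc NCond} inequality \eref{eq:CNat} for $U=\Cb(\cF)$ (using $\Sz(\cF)=S(\Cb(\cF))$); and conversely, every $U\subsetneq C$ arising in {\sc NCond} is $\Cb(\cF)$ for a suitable saturated facet (and symmetrically for $V\subsetneq S$). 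Thus, when \emph{every} non-zero facet is saturated, the full system \eref{eq:SNat} is nothing but the full system {\sc NCond}. One small point to address cleanly: the map from facets to the sets $U=\Cb(\cF)$ need not be injective, and {\sc NCond} ranges over all $U\subsetneq C$ while {\sc SCond} ranges over facets; but the inclusion in both directions of the two solution sets is what matters, and both directions are supplied by the argument just sketched, so the polyhedra coincide.

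Having identified $\{\mu : \textrm{{\sc SCond}}\} = \{\mu : \textrm{{\sc NCond}}\}$ under the hypothesis, I would conclude as follows. If $\mu$ satisfies {\sc NCond}, then it satisfies {\sc SCond}, so by Proposition \ref{pr:CF} the model $[(C,S,E),\mu,\mbox{{\sc Pol}}]$ is stable for every admissible {\sc Pol}; hence the stability region contains the {\sc NCond} polyhedron. Combined with the always-valid reverse inclusion (Lemma \ref{le-le}), the stability region equals the {\sc NCond} polyhedron, i.e.\ it is maximal. This holds for every admissible matching policy, which is the assertion.

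There is essentially no serious obstacle here: the corollary is a packaging of Proposition \ref{pr:CF} together with the elementary identity between {\sc SCond} restricted to saturated facets and {\sc NCond}, both established in the excerpt. The only place requiring a little care is the bookkeeping in the equivalence {\sc SCond} $\iff$ {\sc NCond} — making sure that the correspondence between non-zero facets and the sets $U\subsetneq C$, $V\subsetneq S$ appearing in {\sc NCond} is handled symmetrically in customers and servers and that the saturation hypothesis is genuinely used for \emph{all} non-zero facets (so that no extra, possibly stronger, inequality from a non-saturated facet survives in {\sc SCond}). Once that is noted, the proof is immediate.
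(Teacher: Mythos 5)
Your proposal is correct and follows essentially the same route as the paper: the discussion preceding Proposition \ref{pr:CF} already identifies the {\sc SCond} inequalities for saturated facets with the {\sc NCond} inequalities, so when every non-zero facet is saturated the two polyhedra coincide, and the corollary follows by combining Proposition \ref{pr:CF} with the necessity given by Lemma \ref{le-le}. Your added care about the (non-injective but surjective in the relevant direction) correspondence between saturated facets and the sets $U\subsetneq C$, $V\subsetneq S$ is exactly the bookkeeping the paper leaves implicit.
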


The bipartite graph $(C=\{1,2\},S=\{1',2'\}, C\times S - \{(2,2')\})$
is such that any non-zero facet is
saturated. Therefore, its stability region is maximal for any admissible policy.  
The same is true for the ``almost complete graphs''
$(C=\{1,\dots, k\}, S=\{1',\dots, k'\}, C\times S -\{(i,i'),\forall
i\})$.

\begin{ex}
\label{ex:NNscond}
Consider the NN graph from Figure \ref{fig:NN}. The graph has only one non-zero facet
that is non-saturated, facet $(\{3\},\{3'\})$. 
For any admissible policy, the stability region is at least the polyhedron {\sc SCond},
Proposition \ref{pr:CF}, which is defined by:
\begin{equation}\label{eq-region}
\mbox{{\sc NCond}}, \qquad \mu_C(1) + \mu_S(1') > 1 - \mu(2,2') \:.
\end{equation}
Assume now $\mu = \mu_C\times\mu_S$ and $\mu_C = \mu_S$. Set $x=\mu_C(1) = \mu_S(1')$ and 
$y=\mu_C(2) = \mu_S(2')$. Then:
$$
\textrm{{\sc NCond}}: \quad \left \{ 
\begin{array}{ll}
x < 0.5 \\
2x + y  > 1
\end{array}
\right .
\quad \quad
\textrm{{\sc SCond}}: \quad \left \{ 
\begin{array}{ll}
\textrm{{\sc NCond}} \\
2x + y^2 > 1
\end{array}
\right .
$$
\begin{figure}[htb]
  \centering
  \includegraphics[width=.35\textwidth]{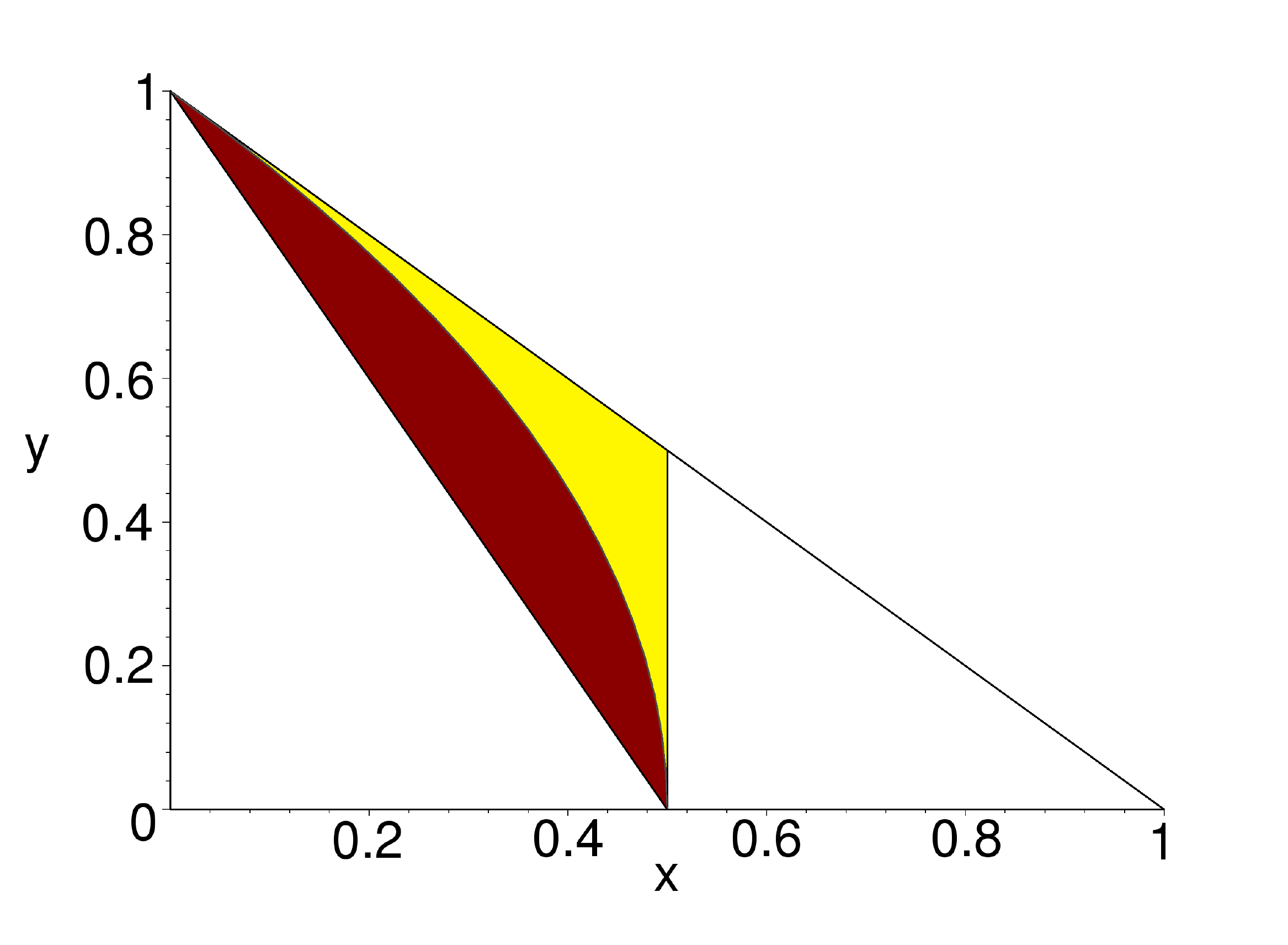}
  \caption{\label{fig:scond}{\sc NCond} and {\sc SCond} for the NN-graph with $\mu = \mu_C \times \mu_S$ and $\mu_C = \mu_S$.}
\end{figure}
In Figure \ref{fig:scond}, the light (yellow) region corresponds to {\sc SCond}, 
and the union of the light and dark~(red) regions corresponds to {\sc NCond}. 
\end{ex}

Unfortunately, for some bipartite graphs, the polyhedron 
{\sc SCond} is empty. This is illustrated by 
the following example.

\begin{ex}
Consider the NNN graph of Figure \ref{fig:NNN}. 
The condition {\sc SCond} for facet $(\{1\}, \{4'\})$ gives:
\begin{equation}
\label{eq:NNN14}
\mu_C(\{3, 4\}) + \mu_S(\{1', 2'\})  > 1 - \mu(2, 3')
\end{equation}
and for facet  $(\{4\}, \{1'\})$:
\begin{equation}
\label{eq:NNN41}
\mu_C(1) + \mu_S(4') > 1 - \mu(2, 2') - \mu(2, 3')- \mu(3, 3').
\end{equation}
The inequality (\ref{eq:NNN14}) is equivalent to: 
$
\mu_C(1)  + \mu_S(4')  <  1 - \mu_C(2) - \mu(\{1,3, 4\}, 3').
$
Together with (\ref{eq:NNN41}) this gives:
$$
\mu_C(1)  + \mu_S(4')  <  1 - \mu_C(2) - \mu(\{1,3, 4\}, 3') < 
1 - \mu(2, 2') - \mu(2, 3')- \mu(3, 3') < \mu_C(1)  + \mu_S(4'),
$$
which is impossible.

 \begin{figure}[htb]
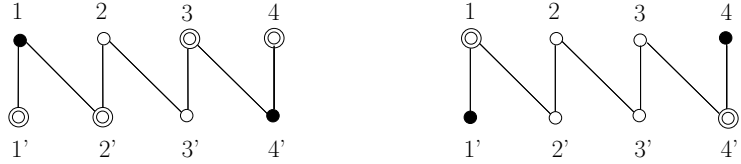

  \centering
  \includegraphics[width=.25\textwidth]{NNNf14.pdftex_t} \qquad \qquad \qquad
  \includegraphics[width=.25\textwidth]{NNNf41.pdftex_t}
  \caption{\label{fig:NNN}NNN graph: facets $(\{1\}, \{4'\})$ and $(\{4\}, \{1'\})$.}
\end{figure}
\end{ex}
\begin{figure}[htb]
  \centering
  \includegraphics[width=\textwidth]{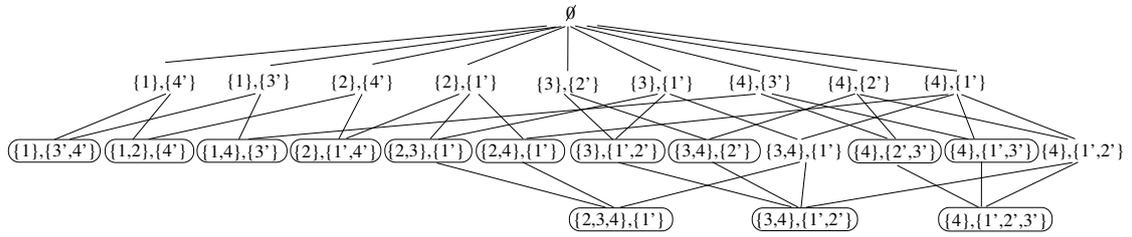}
  \caption{\label{fig:NNNfacets}Facets for the NNN graph. Saturated facets are encircled  
($13$ among $25$ facets).}
\end{figure}

\section{Priorities and MS are not always stable}\label{se-ms}

Consider the NN bipartite graph of Figure \ref{fig:NN} and Example \ref{ex:NNscond}. 
For this model, Proposition \ref{pr:CF} does not allow to decide if the stability region is maximal (see Figure \ref{fig:scond}).
In Figure \ref{fig:zoneMS}, we give simulation results for the average buffer size up to time $n=1000000$ for 
the NN-graph with $\mu = \mu_C \times \mu_S$, $\mu_C = \mu_S$, and MS policy. 
\begin{figure}[htb]
  \centering
  \includegraphics[width=.45\textwidth]{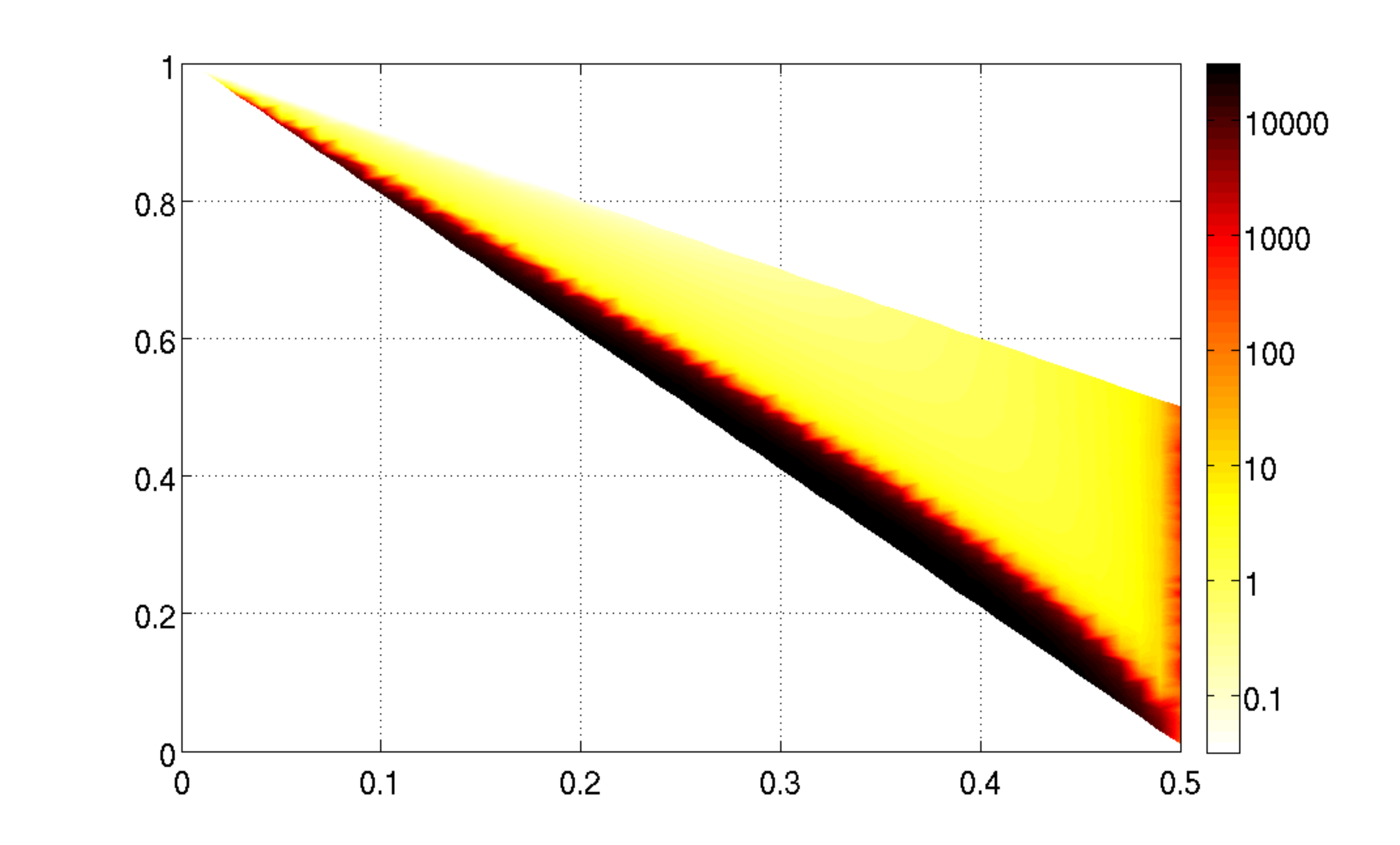}
  \caption{\label{fig:zoneMS}Average buffer size for the NN-graph 
with $\mu = \mu_C \times \mu_S$, $\mu_C = \mu_S$, and MS policy.}
\end{figure} 
We can see that the average buffer size is growing rapidly near the
$2x+y=1$ line. This does not necessarily imply unstability, as even for
stable models we could have the mean stationary buffer size that is
growing unboundedly as we approach the boundary of the stability
region.  
In fact, we show below that for the PR and MS matching policies, the
stability region is not maximal. 

\begin{proposition}
Consider the NN model with either the MS policy or the PR policy given by: 
 $$
A = \left [
\begin{array}{ccc}
 0 & 2 & 1 \\
 2 & 1 & 0 \\
 1 & 0 & 0 
\end{array}
\right ] \; \textrm{and} \;  
\quad B = \left [
\begin{array}{ccc}
 0 & 2 & 1 \\
 2 & 1 & 0 \\
 1 & 0 & 0 
\end{array} 
\right ].
 $$
For both policies, the stability region is not maximal.
\end{proposition}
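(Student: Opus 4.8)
The plan is to exhibit an explicit open subset of the polyhedron {\sc NCond} on which the Markov chain is transient under both policies, by analyzing the dynamics on the non-saturated facet $(\{3\},\{3'\})$. By the discussion around \eref{eq:SNat}, this is the only facet where {\sc SCond} can fail, so this is exactly where instability must come from. Concretely, I would take $\mu=\mu_C\times\mu_S$ with $\mu_C=\mu_S$, write $x=\mu_C(1)=\mu_S(1')$, $y=\mu_C(2)=\mu_S(2')$, $z=\mu_C(3)=\mu_S(3')$ (so $x+y+z=1$), and choose parameters in the ``gap'' region $2x+y^2<1<2x+y$ from Example~\ref{ex:NNscond} — in particular with $y$ small, so that $z$ is bounded away from $0$ and the chain spends substantial time on the facet $(\{3\},\{3'\})$.

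First I would set up the restricted dynamics. On the facet $(\{3\},\{3'\})$ one has $\Cb=\{3\}$, $\Sb=\{3'\}$, $\Cz=\{1,2\}=\Sz$ (on each side), $\Cr=\emptyset=\Sr$ is false here — in fact $\Cr=\{ \,\}$; let me instead track the coordinate $k=x_3=y_3$, the common buffer size, together with which facet we are in. I would argue that under MS (and under the given PR priorities, which rank the ``short'' classes highest), an arriving pair $(c,s)$ with $c\in\{1,2\}$ typically gets matched against the stored class-$3$ items: a pair $(1,1')$, $(2,2')$, etc., matches away one class-$3$ customer and one class-$3'$ server (decreasing $k$), while a pair $(3,3')$ does nothing on this facet (it could increase $k$ only if it cannot be matched, which on the facet $(\{3\},\{3'\})$ it cannot), and pairs like $(1,2')$ or $(2,1')$ have to be tracked carefully since they may push the state onto a neighbouring facet. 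The key point is that the priority/MS rule systematically refuses to use the ``long'' class-$3$ buffer to absorb the class-$1$/class-$2$ arrivals in the favourable way; instead it keeps shuffling, so that the net drift of $k$ is controlled by the rate of $(3,3')$ arrivals, which is $z^2$, against the rate at which class-$3$ items get matched, which I expect to be governed by $\min$-type quantities comparable to $y$ rather than to the full $2x$. I would compute this drift explicitly from the transition rules in Definition~\ref{def:bfc} (the MS case) and from the matrices $A,B$ (the PR case), showing it equals something like $z^2-(\text{expression in }x,y)$ that is strictly positive when $y$ is small enough while {\sc NCond} still holds.

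The cleanest way to finish is a Lyapunov/Foster argument for transience (Borovkov–Foss, or the standard transience criterion, e.g. in \cite{brem99}): produce a function of the state — essentially $k=x_3=y_3$ on the relevant facets, suitably truncated — whose conditional drift is bounded below by a positive constant on all states with $k$ large, together with a bounded-jumps condition (jumps are $0$ or $\pm1$ here). Since the chain, once $k$ is large, stays on facets with large class-$3$ content for a long time (leaving such a facet requires draining $k$, which the drift opposes), this yields positivity of $\lim k_n/n$ along a positive-probability event, hence transience, hence non-stability, for an open set of $(x,y)$ strictly inside {\sc NCond}. The main obstacle is the facet bookkeeping: I must verify that the ``bad'' drift persists not only on $(\{3\},\{3'\})$ but on every facet reachable with large class-$3$ buffer (for the NN graph these are $(\{3\},\{3'\})$ and the saturated neighbours $(\{3\},\{2',3'\})$, $(\{2,3\},\{3'\})$, etc.), and that the PR matrices $A,B$ given in the statement really do reproduce the same myopic behaviour as MS on these facets — i.e. that ranking $\alpha_c,\beta_s$ by ``$2$ beats $1$ beats $0$'' forces class-$3$ to have lowest priority everywhere it competes. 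Once that case analysis is in hand, the drift computation and the Foster-transience conclusion are routine.
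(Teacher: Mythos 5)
Your overall target (transience on an open subset of {\sc NCond}, localized around states with a large class-$3$/$3'$ buffer) is the right one, but the engine of your argument fails: there is no state function of the kind you propose (``essentially $k=x_3=y_3$, suitably truncated'') whose conditional drift is bounded below by a positive constant on \emph{all} states with large class-$3$ content, anywhere inside {\sc NCond}. Indeed {\sc NCond} forces $\mu_C(3)<\mu_S(1')$ and $\mu_S(3')<\mu_C(1)$. On the pure facet $(\{3\},\{3'\})$ the policy has no choice to make: the stored class-$3$ customers are drained by every arrival $(1,1')$ or $(2,1')$ and fed by $(3,2')$, $(3,3')$, so the drift of $x_3$ there is exactly $\mu_C(3)-\mu_S(1')<0$ — in particular your heuristic ``rate $z^2$ of $(3,3')$ arrivals against min-type quantities comparable to $y$'' is wrong; the drain rate is of order $\mu_S(1')$, not $y$. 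Likewise, for the more natural count $L=X_2+X_3$, the conditional drift is $\mu_S(3')-\mu_C(1)<0$ whenever $X_2>0$ and $\mu_C(3)-\mu_S(1')<0$ whenever $Y_2>0$, under {\sc NCond}, for \emph{any} admissible policy. So the drift is strictly negative on a large part of the region you want to push to infinity, and a uniform Foster-type transience criterion cannot close the argument. (A small symptom of the same miscount: on $(\{3\},\{3'\})$ a $(2,2')$ arrival matches with itself and does not decrease the class-$3$ buffer.)

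The missing idea is an averaging over a fast component. What makes MS/PR unstable is that they keep emptying the class-$2$/$2'$ buffers, so the chain keeps returning to the regime $X_2=Y_2=0$, which is the only place where the drift of $X_2+X_3$ is positive; whether the net effect is expansion depends on how much time the process $X_2(n)-Y_2(n)$ spends at $0$ versus away from $0$. The paper's proof makes this precise by observing that $X_2-Y_2$ evolves as an autonomous Markov chain on $\Z$ as long as $X_2+X_3>0$, computing its stationary law $\pi$ explicitly, and showing that the averaged drift $\pi(\Zm)\alpha+\pi(0)\beta+\pi(\Zp)\gamma$ (with $\alpha,\gamma<0$ and $\beta>0$ as above) is strictly positive for an explicit $\mu$ satisfying {\sc NCond} (product form with $\mu_C=\mu_S=(1/3,2/5,4/15)$ — note $y=2/5$ is not small there); transience then follows from the ergodic theorem for the auxiliary chain together with an initial condition with $\min(X_3(0),Y_3(0))=K_\varepsilon$ large, which keeps the class-$3$ buffers positive on an event of probability $\geq 1-\varepsilon$, not from a uniform drift bound. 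For MS one must additionally replace $X_2+X_3$ by $\min(X_3-X_2,\,Y_3-Y_2)$ so that class $2$ really is the shorter queue in the relevant states — this is exactly the verification you flagged but did not carry out. Without the auxiliary-chain averaging step, your proposal does not yield the result.
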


\proof
We carry out the proof for the PR policy. The idea of the proof is to
play with two different Markov chains: the one describing the evolution
of the buffer content, and an auxiliary one which mimicks the
evolution of the customers/servers of type 2/2' on some of the
facets. 

\medskip

Consider an auxiliary Markov chain on $\Z$ with transition probabilities:

\begin{center}
\begin{tabular}{c|ccc}
               & $x \rightarrow x-1$ & $x \rightarrow x$ & $x \rightarrow x+1$ \\
\hline 
$x < 0$ & $a_{-1}$ & $a_0$ & $a_{1}$ \\
$x = 0$ & $b_{-1}$ & $b_0$ & $b_1$ \\
$x > 0$ & $c_{-1}$ & $c_0$ & $c_1$
\end{tabular}
\end{center}

Assume that $a_{-1},a_1,b_{-1},b_1,c_{-1},c_1$, are all different from
0. The chain is positive recurrent iff: $\bigl[ a_{-1} < a_{1}, \ c_1 <
  c_{-1} \bigr]$. The stationary distribution is then equal to:
\begin{eqnarray*} 
 \pi(0) & = & \Bigl( 1 + \frac{b_{-1}}{a_1 - a_{-1}} + \frac{b_1}{c_{-1} - c_1}  \Bigr)^{-1},\\
 \pi(x) & = & 
  \pi(0)\frac{b_1}{c_1}\Bigl(\frac{c_1}{c_{-1}}\Bigr)^x \ \ \qquad \mbox{if } x > 0,\\
  \pi(x) & = &  \pi(0)\frac{b_{-1}}{a_{-1}}
  \Bigl(\frac{a_{-1}}{a_1}\Bigr)^{|x|} \qquad   \mbox{if }  x < 0  \:.
\end{eqnarray*}

Consider a NN-model with a probability $\mu$ such that $\mbox{supp}(\mu) = C \times S$.
Let $(X,Y)=(X(n), Y(n))_{n}$ be the Markov chain of the buffer content, where
$X(n) = (X_1(n), X_2(n), X_3(n))$ and $Y(n) = (Y_1(n), Y_2(n),
Y_3(n))$.  Assume wlog that $(X,Y)$ is given under the form of a Stochastic
Recursive Sequence, that is:
\[
(X(n+1),Y(n+1)) = \Phi ( X(n), Y(n), \theta_n)\:,
\]
where $(\theta_n)_n$ is an i.i.d. sequence of r.v.'s distributed
according to $\mu$, and $\Phi$ is a deterministic function. 

Consider now the process $(X_2(n)-Y_2(n))_n$. This is not a Markov
chain. However, if $X_2(n)+X_3(n)>0$, then ``it becomes one''. More
precisely, if $X_2(n)+X_3(n)>0$, then 
\begin{equation}\label{eq-mc}
X_2(n+1)-Y_2(n+1) = \Phi_2 ( X_2(n)-Y_2(n), \theta_n) \:,
\end{equation}
where $\Phi_2$ is a deterministic function. This can be checked by
direct inspection. Moreover, the transition kernel on $\Z$ defined by the
recursion \eref{eq-mc} is of the type of the above 
auxiliary chain
with parameters: 
\begin{eqnarray*}
a_1 & = &\mu(1,1') + \mu(1,3') + \mu(2,1') + \mu(2,3'),\\
a_0 & = &\mu(1,2') + \mu(2,2') + \mu(3,1') + \mu(3,3'),\\
a_{-1} & = & \mu(3,2'),\\
b_1 & = &\mu(2,1') + \mu(2,3'),\\
b_0 & = &\mu(1,1') + \mu(1,3') + \mu(2,2') + \mu(3,1') + \mu(3,3'),\\
b_{-1} & = &\mu(1,2') + \mu(3,2'),\\
c_1 & = & \mu(2,3'),\\
c_0 & = &\mu(1,3') + \mu(2,1') + \mu(2,2') + \mu(3,3'),\\
c_{-1} & = &\mu(1,1') + \mu(1,2') + \mu(3,1') + \mu(3,2').
\end{eqnarray*}
Let us justify for instance the values of $c_{-1},c_0,c_1$. 
We are in the case $X_2(n)+X_3(n)>0$ and $X_2(n)-Y_2(n)>0$ which implies: 
\[
X_1(n)=0, X_2(n) >0, \quad  Y_1(n)=Y_2(n)=0,
Y_3(n)>0 \:,
\]
In Table \ref{tab:MS} below, we show the effect of the different possible types
of arrivals, restricting to the ones which may affect $X_2$, i.e. when the
customer class is 2 or the server class is $1'$ or $2'$. To simplify, we
have assumed in Table \ref{tab:MS} that $X_3>0$. For $X_3=0$, the ``Possible
  matchings'' column would be affected, but not the ``Selected
  matchings'' and ``$\Delta X_2$'' columns. 

\begin{table}[htb]
\begin{center}
\begin{tabular}{c|c|c|c}
   Arrival            & Possible matchings & Selected matchings &
   $\Delta X_2$ \\
\hline 
$(1,1')$ & $(1,3')$, $(2,1')$, $(3,1')$ & $(1,3')$, $(2,1')$ & -1 \\
$(1,2')$ & $(1,3')$, $(1,2')$, $(2,2')$ & $(1,3')$, $(2,2')$ & -1 \\
$(2,1')$ & $(2,1')$, $(3,1')$  & $(2,1')$ & $0$ \\
$(2,2')$  & $(2,2')$  & $(2,2')$ & $0$ \\
$(2,3')$ & $\emptyset$ & $\emptyset$ & $+1$ \\
$(3,2')$ & $(2,2')$ & $(2,2')$ & -1 \\
$(3,1')$ & $(2,1')$, $(3,1')$ & $(2,1')$ & -1 
\end{tabular}
\caption{\label{tab:MS}Effect of arrivals.}
\end{center}
\end{table}

Let us comment on a couple of cases. If the arrival is of type
$(1,1')$, then the selected matching is $(2,1')$ rather than $(3,1')$
due to the PR policy ($B_{2, 1'} > B_{3, 1'}$). 
If the arrival is of type $(1,2')$, the selected matching is $(2,2')$
rather than $(1,2')$ according to the buffer-first
property of admissible policies, see \S \ref{sse-match}. 
The other cases are argued similarly. 

\medskip

Let us introduce a new Markov chain $(W_n)_n$ on $\Z$ defined by:
\[
W_{n+1} = \Phi_2(W_n, \theta_n) \:.
\]
(The process $(W_n)_n$ is different from the process
$(X_2(n)-Y_2(n))_n$. The former is always defined according to the
recursion \eref{eq-mc} while the latter is defined according to
\eref{eq-mc} only for the $n$'s such that $X_2(n)+X_3(n)>0$. The
former is Markovian while the latter is not.) 

Condition  $c_1 < c_{-1}$ becomes $\mu_C(2) < \mu_S(1') + \mu_S(2')$ and 
$a_{-1} < a_{1}$ becomes $\mu_S(2') < \mu_C(1) + \mu_C(2)$. 
Both conditions follow from {\sc NCond}. So the auxiliary chain
$(W_n)_n$ is ergodic and its stationary distribution $\pi$
satisfies:
$$
 \pi(0)  =  \Bigl( 1 + \frac{b_{-1}}{a_1 - a_{-1}} + \frac{b_1}{c_{-1} - c_1}  \Bigr)^{-1},\quad
 \pi(\Zp)  =  \pi(0)\frac{b_1}{c_{-1} - c_1},\quad
 \pi(\Zm)  =  \pi(0)\frac{b_{-1}}{a_1 - a_{-1}} \:,
$$
where $\Zp = \{1, 2, \ldots\}$ is the set of strictly positive integers
and $\Zm = \{-1, -2, \ldots\}$ is the set of strictly negative integers. 
{From} now on, we fix an initial condition $W_0$ satisfying 
\[
W_0 \sim \pi, \qquad W_0 \ \indep \ (\theta_n)_n \:.
\]

Let us switch back to the Markov chain $(X,Y)$. 
Set 
\[
L(n) =X_2(n) + X_3(n), \qquad \Delta L(n) = L(n+1) -
L(n)\:.
\]
If $L(n)>0$, then we check by direct inspection that:
\[
\Delta L(n) = \Psi ( X_2(n)-Y_2(n), \theta_n)\:,
\]
where $\Psi$ is a deterministic function. 
We have in particular
\begin{eqnarray*}
\alpha & \stackrel{\mbox{def}}{=} & \expct{\Delta L(n) \; | \; L(n) > 0, Y_2(n) >0} \ = \ \mu(3,2') + \mu(3,3') - \mu(1,1') - \mu(2, 1'),\\
\beta &  \stackrel{\mbox{def}}{=} & \expct{\Delta L(n) \; | \; L(n) >
  0, X_2(n) = Y_2(n) = 0} \ = \ \mu(2, 3') + \mu(3,2') + \mu(3,3') -
\mu(1,1'), \\ 
\gamma & \stackrel{\mbox{def}}{=} & \expct{\Delta L(n) \; | \; L(n) > 0, X_2(n) >0} \ = \ \mu(2, 3') + \mu(3,3') - \mu(1,1') - \mu(1,2').
\end{eqnarray*}

Let us turn again to the auxiliary chain $(W_n)_n$. By performing
the computation, we get
\[
\expct{\Psi (W_n, \theta_n)}= \pi (\Zm) \alpha + \pi (0) \beta + \pi (\Zp)
\gamma \:.
\]
The Ergodic Theorem for Markov Chains, see for instance
\cite[\S 3.4]{brem99}, gives:
$$\lim_{n} \frac{1}{n} \sum_{i=0}^{n-1} \Psi (W_n, \theta_n) = 
\pi (\Zm) \alpha + \pi (0) \beta + \pi (\Zp) \gamma \quad
\textrm{a.s.}
$$
Assume that  $\pi (\Zm) \alpha + \pi (0) \beta + \pi (\Zp) \gamma
> 0$. 
Then we have:  
\begin{equation}
\label{eq:phi1}
 \lim_{n} \sum_{i=0}^{n-1} \Psi (W_n, \theta_n) = 
+\infty \quad \textrm{a.s.}
\end{equation}
Therefore, for each $\varepsilon > 0$, there exists $K_{\varepsilon} \geq 0$ such that
\begin{equation}\label{eq-arg}
P \Bigl(\min_{n \geq 1} \sum_{i=0}^{n-1} \Psi (W_n, \theta_n)  > -K_{\varepsilon} \Bigr) 
\geq 1-\varepsilon \:.
\end{equation}

Let us switch back to the Markov chain $(X(n),Y(n))_n$.
Choose the initial condition $(X(0), Y(0))$ such that
\begin{equation}\label{eq-ic}
X_2(0) - Y_2(0) =W_0 \qquad \min (X_3(0), Y_3(0)) = K_{\varepsilon} \:,
\end{equation}
where $K_{\varepsilon}$ is defined in \eref{eq-arg}. 
By construction, on the event $\cA= \{\min_{n \geq 1} \sum_{i=0}^{n-1} \Psi
(W_n, \theta_n)  > -K_{\varepsilon} \}$, we have 
\[
\forall n, \qquad L(n)>0, \quad X_2(n)-Y_2(n)= W_n\:.
\]
So, on the event $\cA$, we have
\begin{equation}
\label{eq:phi2}
L(n) = K_{\varepsilon} + \sum_{i=0}^{n-1} \Delta L(i) = K_{\varepsilon} + \sum_{i=0}^{n-1}
\Psi (W_i, \theta_i) \ \longrightarrow +\infty \:.
\end{equation}
We conclude that the Markov chain $(X,Y)$ of the NN-model is transient.

\medskip

We now show that the stability region is not maximal, by giving an example 
such that $\pi (\Zm) \alpha + \pi (0) \beta + \pi (\Zp) \gamma >
0$. 
Consider $\mu_C = \left ( 1/3 , 2/5, 4/15  \right)$, 
$\mu_S = \mu_C$, and
$\mu = \mu_C \times \mu_S$.
Thus conditions {\sc NCond} are satisfied. However, we have:
\begin{align*}
a_{1} = \frac{11}{25}, \quad a_0 = \frac{34}{75}, \quad  a_{-1} = \frac{8}{75},\\
b_{1} = \frac{6}{25},  \quad b_0 = \frac{13}{25}, \quad  b_{-1} = \frac{6}{25},\\
c_{1} = \frac{8}{75},  \quad c_0 = \frac{34}{75}, \quad  c_{-1} = \frac{11}{25},
\end{align*}
and
$$
\pi(0) = \frac{25}{61}, \quad \pi(\Zp) = \frac{18}{61}, \quad \pi(\Zm) = \frac{18}{61}.
$$
This gives $\alpha = -1/15, \ \beta = 13/75, \ \gamma = -1/15$, 
and,
\[
\pi (\Zm) \alpha + \pi (0) \beta + \pi (\Zp)
\gamma = \frac{29}{915} > 0\:.
\]
This completes the proof. 

\medskip

Consider now the MS policy. Set $L(n) = \min ( X_3(n) - X_2(n), \;
Y_3(n) - Y_2(n) )$. 
The initial distribution can be taken such that 
\[
X_2(0) - Y_2(0) \sim
 \pi, \qquad \begin{cases} X_3(0) - X_2(0) = K_{\varepsilon} & \mbox{if } X_2(0) > 0
   \\ Y_3(0) - Y_2(0) = K_{\varepsilon} &  \mbox{otherwise} \end{cases} \:.
\]
Modulo these modifications, the proof carries over unchanged. 
\endproof

\section{ML is always stable}\label{se-ml}

In this section, we show that the ML policy
has a maximal stability region. 

\medskip

The idea of the proof is as follows. 
Consider the quadratic Lyapunov function:
\begin{equation}\label{eq-lyap}
L(x,y) =  \sum_{c \in C}x_c^2 + \sum_{s \in S} y_s^2, \quad (x,y)\in \CE. 
\end{equation} 
Observe that the
ML policy minimizes the value of this Lyapunov function at each
step. 
We introduce an alternate policy that depends on the arrival
distribution $\mu$. For this policy, we manage to prove that the quadratic Lyapunov
function has a negative drift outside a finite region.

\begin{theorem} For any bipartite graph, the ML policy
has a maximal stability region. 
\end{theorem}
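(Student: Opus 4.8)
The plan is to use the quadratic Lyapunov function $L(x,y) = \sum_{c\in C}x_c^2 + \sum_{s\in S}y_s^2$ and prove a Foster--Lyapunov negative drift condition outside a finite set. The key difficulty is that analyzing the drift of $L$ directly under the ML policy is awkward, because the matching chosen by ML at each step depends on the current state in a complicated way. The trick, as hinted in the excerpt, is to observe that ML minimizes $L$ among all admissible responses to a given arrival; hence it suffices to exhibit \emph{any} (possibly state-dependent, possibly $\mu$-dependent) admissible matching policy whose one-step expected drift of $L$ is bounded above by a negative constant outside a finite region. The drift of $L$ under ML is pointwise no larger than the drift under this auxiliary policy, so negative drift for the auxiliary policy transfers to ML.

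First I would set up the drift computation. Starting from a state $(x,y)$ and an arrival $(c,s)$, a matching that pairs $c$ (resp. $s$) with a buffered server of class $s'$ (resp. customer of class $c'$) changes $L$ by terms like $-2y_{s'}+1$ (resp. $-2x_{c'}+1$), while leaving $c$ (resp. $s$) unmatched adds $2x_c+1$ (resp. $2y_s+1$). So the expected drift, conditioned on $(x,y)$, is a sum over arrival types $(c,s)$ weighted by $\mu(c,s)$ of such linear-in-the-coordinates expressions, determined by the auxiliary policy's choices. The goal is to choose, for each $(x,y)$ in the ``large'' part of the state space, a matching rule making this sum $\le -\epsilon$.

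Next, the heart of the argument: constructing the auxiliary policy from a fractional solution to the flow problem. Since {\sc NCond} holds, Lemma~\ref{le-flow} gives a flow $T$ of value $1$ in $\cN$ with $T(c,s)>0$ for all $(c,s)\in E$; equivalently, there is a doubly-stochastic-type decomposition describing, for each arriving pair type, target matching proportions that on average drain each class at exactly its arrival rate. The idea is to have the auxiliary policy, given arrival $(c,s)$ and current state $(x,y)$, route the match so as to push the buffer content toward the ``balanced'' configuration dictated by $T$ — concretely, match $s$ with the buffered customer class $c'\in C(s)$ whose current count $x_{c'}$ is largest relative to what $T$ predicts, and symmetrically for $c$. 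One then shows that, averaged over $\mu$, the negative quadratic terms $-2x_{c'}T(\text{-weight})$ dominate the positive $+2x_c$ terms by a margin proportional to $\max_c x_c + \max_s y_s$ once the buffer is large, because the flow strictly dominates the arrival marginals on every proper subset. This is exactly where the strict inequalities in {\sc NCond} (and the positivity $T(c,s)>0$ on all of $E$) are used, and it is the step I expect to be the main obstacle: making the routing rule precise enough that the cross terms telescope correctly and verifying the linear negative bound uniformly over the facet structure.

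Finally I would assemble the pieces: property {\sc UTC} holds by Theorem~\ref{th-P} (since {\sc NCond} is assumed), so there is a unique terminal communicating class; the auxiliary-policy drift bound plus the pointwise inequality $\mathbb{E}[L \mid \text{ML}] \le \mathbb{E}[L \mid \text{aux}]$ gives $\mathbb{E}[L(X_{n+1},Y_{n+1})\mid (X_n,Y_n)=(x,y)] - L(x,y) \le -\epsilon$ for all $(x,y)$ outside a finite set, together with bounded jumps (hence bounded drift everywhere); the Foster--Lyapunov criterion (\cite[\S 5.1]{brem99}) then yields positive recurrence, i.e. stability. Since this works for every $\mu$ satisfying {\sc NCond} and every bipartite graph, the stability region coincides with the polyhedron defined by {\sc NCond}, i.e. it is maximal.
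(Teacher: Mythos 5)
Your overall architecture is exactly the paper's: the quadratic Lyapunov function $L(x,y)=\sum_c x_c^2+\sum_s y_s^2$, the observation that ML minimizes $L$ over admissible one-step responses (so any auxiliary admissible policy with negative drift transfers the bound to ML), and a Foster--Lyapunov conclusion. However, the step you yourself flag as the main obstacle is precisely where the paper's proof does all the work, and your sketch of it has a genuine gap. You propose to build the auxiliary policy from the single global flow $T$ of Lemma \ref{le-flow} (with $T(c,s)>0$ on all of $E$) and to route an arriving server to the buffered class ``largest relative to what $T$ predicts.'' The difficulty is that on a given facet $\cF$ only the classes in $\Cb(\cF)$ are present in the buffer, and admissibility forces an arriving server $s\in\Sz(\cF)$ to be matched inside $\Cb(\cF)\cap C(s)$; the global flow $T$ may allocate most of the mass leaving $s$ to customer classes in $\Cz(\cF)$ or $\Cr(\cF)$, which are empty, so $T$ by itself does not prescribe feasible matching proportions, and it gives no guarantee that every non-empty class is drained at a rate strictly exceeding its arrival rate --- which is the property the drift argument needs. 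Your ``largest relative to $T$'' rule is also state-dependent in a way that reintroduces the very coupling between choice and state that the auxiliary policy was supposed to avoid, and you do not verify the resulting drift bound.

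The paper closes this gap by making the auxiliary policy \emph{facet-dependent and randomized}: for each non-empty facet $\cF$ it solves a new max-flow problem on the restricted network with nodes $\Cb(\cF)\cup\Sz(\cF)$ (and symmetrically $\Cz(\cF)\cup\Sb(\cF)$), where {\sc NCond} guarantees the min cut equals $\mu_C(\Cb)$, and it perturbs the server capacities by $\eta>0$ to obtain a flow with strict slack. The matching probabilities $P^{\cF}_{sc}$ are read off this flow plus a uniform redistribution of the slack, which yields $\sum_{s\in S(c)}\mu_S(s)P^{\cF}_{sc}=\mu_C(c)+\varepsilon_c$ with $\varepsilon_c>0$ for every non-empty class $c$: each such class is drained strictly faster than it arrives, giving a per-class drift bounded by $2\mu_C(c)-\varepsilon_c x_c$ and hence an overall bound of the form $6-2\varepsilon\sum_c x_c$, negative outside a finite set uniformly over the finitely many facets. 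If you replace your global-flow routing rule by this per-facet randomized construction (or prove an analogous strict-drain property for your rule, which you have not done), your plan becomes the paper's proof; as written, the negative-drift claim for the auxiliary policy is asserted rather than established.
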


\begin{proof}
We introduce an alternate matching policy. This policy is admissible,
corresponds to a commutative state space,
but does not belong
to the policies listed in \S \ref{sse-match}. It is a random policy and its specificity is to be 
facet dependent.

Let us describe the alternate policy
on a non-empty facet $\cF$. Set $\Cb = \Cb(\cF), \Sb = \Sb(\cF), \Cz=\Cz(\cF),$ etc. 
To describe the matching policy, the only thing we have to describe is
the way to match an arriving customer of class $c\in \Cz$,
resp. server of class $s\in \Sz$. 
Let us concentrate first on a server of class $s\in \Sz$. 

\medskip

{From} {\sc NCond}:
$$\mu_C(\Cb) < \mu_S(\Sz), \qquad  \mu_S(\Sb)<\mu_C(\Cz).$$ 
We build a directed graph as in \eref{eq-dg} but restricted to the
nodes in $\Cb$ and $\Sz$. Formally, 
\begin{equation}\label{eq-dg2}
\cN_{\cF} = \bigl(\Cb\cup \Sz \cup \{i,f\},\{ E \cap
\Cb\times \Sz\} \cup\{(i,c), c\in
\Cb \} \cup \{(s,f), s\in \Sz \}\bigr)\:. 
\end{equation}
Endow the arcs of $E\cap
\Cb \times \Sz$ with infinite capacity, an arc of type $(i,c)$
with capacity $\mu_C(c)$, and an arc of type $(s,f)$ with capacity
$\mu_S(s)$. 

\medskip

As in Lemma \ref{le-flow}, {\sc NCond} implies that the minimal cut of $\cN_{\cF}$ has capacity 
$\mu_C(\Cb)$. Any maximal flow $T´$ is such that: $\forall c\in
\Cb,\ T´(i,c)=\mu_C(c)$, $\forall s \in
\Sz, \ T´(s,f) \leq \mu_S(s)$. Let us prove that there exists 
a maximal flow $T$ such that:
\begin{eqnarray*}
\forall c\in
\Cb, && T(i,c)=\mu_C(c)  \\
\forall s \in
\Sz, && T(s,f) < \mu_S(s)  \:. 
\end{eqnarray*}
Define $\widetilde{\mu}_S$ on $\Sz$ by $\widetilde{\mu}_S(s) = \mu_S(s) - \eta$. Here $\eta >0$ is chosen to be small enough so that: 
$\forall U\subset \Cb, \mu_C(U) < \widetilde{\mu}_S(S(U)), \ \forall V\subset \Sz, \widetilde{\mu}_S(V) < \mu_C(C(V))$. This is possible since {\sc NCond} are open conditions. Consider 
the same network as above but with the capacities $\widetilde{\mu}_S(s )$ on the arcs $(s,f)$. 
The minimal cut still has capacity $\mu_C(\Cb)$. A maximal flow $T$ is such that: 
$\forall c\in
\Cb,\ T(i,c)=\mu_C(c)$, $\forall s \in
\Sz, \ T(s,f) \leq \widetilde{\mu}_S(s)< \mu_S(s)$. Clearly, $T$ is also a flow for the original network. 

\medskip

The server $s\in \Sz$ is matched to $c\in \Cb \cap C(s)$ randomly, independently of
the past, with probability:
\[
P^{\cF}_{sc}= \frac{1}{\mu_S(s)} \Bigl[ \ T(c,s) +
  \frac{\mu_S(s)-T(s,f)}{|\Cb \cap C(s)|} \ \Bigr] \:.
\]
Let us check that this defines indeed a probability:
\begin{eqnarray*}
\sum_{c\in \Cb \cap C(s)} P^{\cF}_{sc} & = &  \frac{1}{\mu_S(s)} \Bigl[ \ 
  \mu_S(s)-T(s,f) + \sum_{c\in \Cb \cap C(s)}
  T(c,s) \ \Bigr] \\
& = &  \frac{1}{\mu_S(s)} \Bigl[ \ \mu_S(s)-T(s,f)  + T(s,f) \ \Bigr]
\ \  =
  \ \  1 \:.
\end{eqnarray*}
For $c\in \Cb$, $s\in S(c)$, set $\varepsilon_{sc}=
(\mu_S(s)-T(s,f))/|\Cb \cap C(s)|$. For  $c\in \Cb$, set
 $\varepsilon_{c}= \sum_{s\in S(c)} \varepsilon_{sc}$. We have $\varepsilon_{c}>0$. 
Observe that:
\begin{equation}
 \label{eq:vn}
 \forall c\in \Cb, \qquad \sum_{s\in S(c)} \mu_S(s)P^{\cF}_{sc} =
 \mu_C(c) + \varepsilon_c \:.
\end{equation}

Symmetrically, we define the directed graph of type \eref{eq-dg2} but
on the nodes $\Cz$ and $\Sb$. We build a maximal flow on
this new graph as above, and based on this flow, we define the
probability $P^{\cF}_{cs}$ that a customer $c\in
\Cz$ is matched to a server $s\in \Sb \cap S(c)$. For $s\in \Sb$, 
we define $\varepsilon_{cs}, \; c\in C(s)$, and $\varepsilon_{s}$
accordingly. We have $\varepsilon_{s}>0$. 

\medskip

Let $(X(n), Y(n))_{n}$ be the Markov chain of the buffer-content of the model. 
Assume that $(X(n), Y(n))= (x,y) \in \cF$ and let $c\in \Cb$. We have:
\begin{itemi}
\item[(i)] $X(n+1)_c = X(n)_c -1$ iff: 
\begin{itemi} 
\item[-] the arriving customer is not of class $c$;
\item[-] the arriving server is of class $s\in S(c)$; 
\item[-] the arriving server is matched with $c$ (probability
$P^{\cF}_{sc}$). 
\end{itemi}
This case happens with probability $\alpha_c = \sum_{s \in S(c)} \mu(C-c, s)P^{\cF}_{sc}$.
\item[(ii)] $X(n+1)_c = X(n)_c +1$ iff:  
\begin{itemi}
\item[-] the arriving customer is of class $c$;
\item[-] the arriving server is not matched with $c$. This may occur in two possible ways: either the
arriving server is of class $s\not\in S(c)$, or the arriving server is
of class $s\in S(c)$ but is not matched with $c$ (probability
$1-P^{\cF}_{sc}$).
\end{itemi}

This case happens with probability $\beta_c = \mu(c,S-S(c)) + \sum_{s \in S(c)} \mu(c, s)(1-P^{\cF}_{sc})$. 
\item[(iii)] Otherwise, $X(n+1)_c = X(n)_c$. 
\end{itemi}

Using (\ref{eq:vn}), we get:
\begin{eqnarray*}
\sum_{s \in S(c)} \mu(C-c, s)P^{\cF}_{sc} & = & \sum_{s\in S(c)} \mu_S(s)P^{\cF}_{sc} - \sum_{s \in S(c)} \mu(c, s)P^{\cF}_{sc} \\
& = & \mu_C(c) + \varepsilon_c - \sum_{s \in S(c)} \mu(c, s)P^{\cF}_{sc} \\
& = & \mu(c, S- S(c)) + \mu(c, S(c)) + \varepsilon_c - \sum_{s \in S(c)} \mu(c, s)P^{\cF}_{sc} \\
& = & \mu(c,S-S(c)) + \sum_{s \in S(c)} \mu(c, s)(1-P^{\cF}_{sc}) + \varepsilon_c.
\end{eqnarray*}
Thus, $\alpha_c = \beta_c + \varepsilon_c$. Observe that $\beta_c < \mu_C(c)$.
We get, for $(x, y) \in \cF$ and $c \in \Cb$:
\begin{align*}
\expct{X(n+1)_c^2 - X(n)_c^2 \; |\; (X(n), Y(n))= (x,y)} &= \beta_c (2 x_c +1) - \alpha_c(2 x_c -1) \\
&= 2\beta_c - \varepsilon_c (2 x_c -1)\\
& < 2\mu_C(c)- \varepsilon_c x_c.
\end{align*}

Let $L$ be the quadratic Lyapunov function \eref{eq-lyap}. 
Define $\Delta L(n) = L(X_{n+1}, Y_{n+1}) - L(X_{n}, Y_{n})$.
Set $\varepsilon = \min_{v \in \Cb \cup \Sb} \varepsilon_v >0$.
Then (the first term in the sum takes care of the vertices in $C - \Cb$ and $S - \Sb$):
\begin{align*}
\expct{\Delta L(n) \; | \; (X_{n}, Y_{n}) = (x,y)} & <  
2 + \sum_{c \in \Cb} \left (2\mu_C(c) - \varepsilon_c x_c  \right )
+ \sum_{s \in \Sb} \left ( 2 \mu_S(s) - \varepsilon_s y_s  \right )\\
& < 2 + 2 \mu_C(\Cb) + 2 \mu_S(\Sb) - \varepsilon \left(\sum_{c \in \Cb}x_c+\sum_{s \in \Sb}y_s \right)  \\
&< 6 - 2\:\varepsilon \sum_{c \in C}x_c \:.
\end{align*}
Fix $\delta > 0$. If $ \sum_{c \in C}x_c > (6+\delta)/2 \varepsilon$,  
then $\expct{\Delta L(n)} < -\delta$. 
There are finitely many facets, so there is a finite set $A\subset \CE$ such that 
\begin{equation}\label{eq-forlyap}
\forall (x,y) \not\in A, \quad \expct{\Delta L(n)} < -\delta \:. 
\end{equation}
By the Lyapunov-Foster's Theorem, see for instance \cite[\S 5.1]{brem99}, the alternate matching policy is stable.  

Since the ML matching policy minimizes the value of the quadratic Lyapunov function, we have a fortiori 
that \eref{eq-forlyap} holds for it. Therefore, the ML policy
is also stable. 
\end{proof}

\paragraph{Conclusion.}

Many open questions remain. First, we do not know if the stability
region is always maximal for the FIFO and Random policies. Numerical experiments
seem to indicate that it is indeed the case. Second, for the MS and
priority policies, we know that the stability region is not always
maximal, but we do not know how to compute it. Last,
we would like 
to obtain sufficient conditions for stability, valid for all
admissible policies, and which are better than
the ones of \S \ref{se-suff}. 

The program used to carry out the numerical experiments is available
on request from Ana Bu\v si\'c.

\end{document}